\spnewtheorem{claim}{Claim}{\itshape}{\rmfamily}
\begin{document}
\title{TS-Reconfiguration of $k$-Path Vertex Covers\\ in Caterpillars for $k \geq 4$}
\author{Duc~A.~Hoang\orcidID{0000-0002-8635-8462}
}
\authorrunning{D.A.~Hoang}
\institute{Graduate School of Informatics, Kyoto University, Japan\\
	\email{hoang.duc.8r@kyoto-u.ac.jp}
}
\maketitle              %
\begin{abstract}
A $k$-path vertex cover ($k$-PVC) of a graph $G$ is a vertex subset $I$ such that each path on $k$ vertices in $G$ contains at least one member of $I$.
Imagine that a token is placed on each vertex of a $k$-PVC.
Given two $k$-PVCs $I, J$ of a graph $G$, the \textsc{$k$-Path Vertex Cover Reconfiguration ($k$-PVCR)} under Token Sliding ($\mathsf{TS}$) problem asks if there is a sequence of $k$-PVCs between $I$ and $J$ where each intermediate member is obtained from its predecessor by sliding a token from some vertex to one of its unoccupied neighbors.
This problem is known to be $\mathtt{PSPACE}$-complete even for planar graphs of maximum degree $3$ and bounded treewidth and can be solved in polynomial time for paths and cycles.
Its complexity for trees remains unknown.
In this paper, as a first step toward answering this question, for $k \geq 4$, we present a polynomial-time algorithm that solves \textsc{$k$-PVCR} under $\mathsf{TS}$ for caterpillars (i.e., trees formed by attaching leaves to a path).

\keywords{Reconfiguration problems \and Polynomial-time algorithms \and $k$-Path vertex covers \and Caterpillars \and Token sliding.}
\end{abstract}

\section{Introduction}
\label{sec:introduction}

Recently, \textit{reconfiguration problems} have attracted the attention from both theoretical and practical viewpoints.
The input of a reconfiguration problem consists of two \textit{feasible solutions} of some computational problem (e.g., \textsc{Satisfiability}, \textsc{Independent Set}, \textsc{Vertex Cover}, \textsc{Dominating Set}, etc.) and a \emph{reconfiguration rule} that describes an adjacency relation between solutions.
One of the primary goal is to decide whether one feasible solution can be transformed into the other via a sequence of adjacent feasible solutions where each intermediate member is obtained from its predecessor by applying the given reconfiguration rule exactly once.
Such a sequence, if exists, is called a \emph{reconfiguration sequence}.
Readers may recall the classic Rubik's cube puzzle as an example of a reconfiguration problem, where each configuration of the Rubik's cube corresponds to a feasible solution, and two configurations (solutions) are adjacent if one can be obtained from the other by rotating a face of the cube by either $90$, $180$, or $270$ degree.
The question is whether one can transform an arbitrary configuration to the one where each face of the cube has only one color.
For an overview of this research area, see the recent surveys~\cite{Heuvel13,Nishimura18,MynhardtN19}.

Let $G = (V, E)$ be a simple graph.
Let $k \geq 2$ be a fixed positive integer.
A subset $I$ of $V$ is called a \emph{$k$-path vertex cover ($k$-PVC)} if every path on $k$ vertices in $G$ contains at least one vertex from $I$.
The \textsc{$k$-Path Vertex Cover} problem asks, given a graph $G$ and an integer $s$, if there is a $k$-path vertex cover of $G$ whose size is at most some positive integer $s$.
Motivated by the importance of a problem related to secure communication in wireless sensor networks, Bre{\v{s}}ar et al. initiated the study of \textsc{$k$-PVC} in~\cite{BrevsarKKS11} (as a generalisation of the well-known \emph{vertex cover}).
It is known that \textsc{$k$-Path Vertex Cover} is $\mathtt{NP}$-complete for every $k \geq 2$~\cite{AcharyaCBG12,BrevsarKKS11}.
Subsequent work regarding the \emph{maximum} variant~\cite{MiyanoSUYZ18} and \emph{weighted} variant~\cite{BrevsarKSS14} of \textsc{$k$-Path Vertex Cover} has also been considered in the literature.
Recently, the study of \textsc{$k$-Path Vertex Cover} and related problems has gained a lot of attraction from both theoretical aspect~\cite{RanZHLD19,Tsur19} %
and practical application~\cite{FunkeNS14}. %
For more details, see the survey~\cite{Tu22}.

Imagine that each vertex of $G$ contains at most one token and the set of tokens form a $k$-PVC.
Two $k$-PVCs $I, J$ of a graph $G$ are \textit{adjacent} under \textit{Token Sliding} ($\mathsf{TS}$) if there are two vertices $u, v \in V(G)$ such that $I$ is obtained from $J$ (or vice versa) by sliding a token along the edge $uv \in E(G)$ from $u$ to $v$ (or from $v$ to $u$).
The \textsc{$k$-Path Vertex Cover Reconfiguration ($k$-PVCR)} problem under $\mathsf{TS}$ asks whether there is a sequence of adjacent $k$-PVCs that transforms $I$ into $J$.
When $k = 2$, it is simply called the \textsc{Vertex Cover Reconfiguration (VCR)} problem.
The \textsc{VCR} problem has been very well-studied in the literature under $\mathsf{TS}$ as well as other reconfiguration rules.
Readers are referred to~\cite{Nishimura18} for a quick summary of the known results.

Hoang et al.~\cite{HoangSY22} initiated the study of \textsc{$k$-PVCR} under $\mathsf{TS}$ and other reconfiguration rules (namely \textit{Token Jumping ($\mathsf{TJ}$)} and \textit{Token Addition/Removal ($\mathsf{TAR}(u)$)} which respectively involve moving a token to any unoccupied vertex and adding/removing a token such that the resulting set always contains at most $u$ tokens for some given positive integer $u$) for fixed $k \geq 3$.
In particular, they showed that \textsc{$k$-PVCR} under $\mathsf{TS}$ remains $\mathtt{PSPACE}$-complete for planar graphs of maximum degree $3$ and bounded bandwidth, chordal graphs, and perfect graphs.
We note that in a preliminary version~\cite{HoangSY20} of the mentioned paper, the authors wrongly claimed that \textsc{$k$-PVCR} under $\mathsf{TS}$ is $\mathtt{PSPACE}$-complete for bipartite graphs.
The reason is that they reduced from the \textsc{Minimum Vertex Cover Reconfiguration} problem---a problem that has \textit{not} yet been shown to be $\mathtt{PSPACE}$-hard for bipartite graphs under $\mathsf{TS}$.
On the other hand, it is not hard to verify that their reduction remains true for perfect graphs---a superclass of bipartite graphs.
On the positive side, they designed polynomial-time algorithms for solving \textsc{$k$-PVCR} under $\mathsf{TS}$ for some very restricted graphs, namely paths and cycles.
Unfortunately, the complexity of \textsc{$k$-PVCR} under $\mathsf{TS}$ for trees remains unknown, while for $\mathsf{TJ}/\mathsf{TAR}(u)$ it can be solved in linear time~\cite{HoangSY22}.
As an attempt to tackle this open question, in this paper, when $k \geq 4$, we present a polynomial-time algorithm for solving this problem with caterpillars---a subclass of trees---as the input graph.
Roughly speaking, a \textit{caterpillar} is obtained by attaching leaves (i.e., vertices of degree $1$) to a given path (called the \textit{spine}, or \textit{backbone} path).
Intuitively, by characterizing the tokens that ``cannot be moved at all'' in polynomial time, we show that one can efficiently identify all no-instances of \textsc{$k$-PVCR} under $\mathsf{TS}$.
Though this is a familiar technique in designing polynomial-time algorithms for solving several reconfiguration problems, our characterization of such tokens, in several cases, is non-trivial.
In particular, it is done via studying ``a special region $\mathcal{H}$ surrounding a token $t$ satisfying that $t$ can be moved only if no token outside $\mathcal{H}$ can be moved in''. 
(For more details, see Section~\ref{sec:rigid-tokens}.)
Additionally, in a yes-instance, we explicitly describe how to construct a sequence of $\mathsf{TS}$-moves that transforms one $k$-PVC into another.

\section{Preliminaries}

In this section, we define some notation and terminology.
We also briefly recall an useful algorithm from~\cite{HoangSY22} (Algorithm~\ref{algo:partition}).

\subsection{Graph Notation}
\label{sec:graph-notation}

Readers are referred to~\cite{Diestel2017} for the concepts that are not mentioned here.
We use $V(G)$ and $E(G)$ to denote the sets of vertices and edges of a (simple, undirected) graph $G$, respectively.
For $v \in V(G)$, we use $N_G(v)$ to indicate the set of $v$'s \textit{neighbors} in $G$, i.e., the set of vertices that are adjacent to $v$.
The \textit{closed neighborhood} of $v$ in $G$, denoted by $N_G[v]$, is simply the set $N_G(v) \cup \{v\}$.
The \textit{degree} of $v$ in $G$, denoted by $\deg_G(v)$, is the number of vertices in $G$ which are adjacent to $v$, i.e., $\deg_G(v) = \vert N_G(v) \vert$.
For $u, v \in V(G)$, the \textit{distance} between $u$ and $v$ in $G$, denoted by $\text{dist}_G(u, v)$, is the length of a path in $G$ between them whose number of edges is smallest.

$H$ is a \textit{proper subgraph} of $G$ if it is a subgraph of $G$ and $V(G) \setminus V(H) \neq \emptyset$.
For a subset $X$ of $V(G)$, we denote by $G - X$ the graph obtained from $G$ by removing vertices in $X$ (and their incident edges), and $G[X]$ the subgraph induced by vertices in $X$.
For convenience, if $X$ has exactly one member, say $v$, then we write $G - v$ to indicate $G - \{v\}$.
Additionally, if $X = V(H)$ for some induced subgraph $H$ of $G$, then we write $G - H$ to indicate $G - V(H)$.
We say that two vertices $u$ and $v$ are \textit{in the same component} of $G$ if there is a path in $G$ between $u$ and $v$.
A \textit{$k$-path} is simply a path on $k$ vertices.
A vertex $v$ \textit{covers} a path $P$ if $v \in V(P)$.
A \textit{$k$-path vertex cover ($k$-PVC)} $I$ of $G$ is a vertex subset such that every $k$-path is covered by some member of $I$.
We denote by $\psi_k(G)$ the \textit{minimum} size of a $k$-PVC of $G$.

A \textit{tree} is a simple, undirected, connected graph that contains no cycles.
For any pair of vertices $u, v$ in a tree $T$, we use $P_{uv}$ to denote the unique $uv$-path connecting $u$ and $v$ in $T$.
A \textit{caterpillar} $G$ is a tree where $V(G)$ can be partitioned into two sets $S$ (\textit{spine}) and $L$ (\textit{leaves}) satisfying that (a) vertices in $S$ induce a path $s_1s_2\dots s_\ell$ in $G$, and (b) each vertex in $L$ has degree $1$, and its unique neighbor is in $S$.
We use $G = (S \cup L, E)$ to indicate a caterpillar with spine $S$ and leaves $L$.
To avoid ambiguity, we assume that $\ell \geq 2$, $\deg_G(s_1) \geq 2$, and $\deg_G(s_\ell) \geq 2$.
For a vertex $s = s_i \in S$ ($1 \leq i \leq \ell$), we denote by $l(s_i) = s_{i-1}$ (resp. $r(s_i) = s_{i+1}$) the \textit{left-neighbor} (resp. \textit{right-neighbor}) of $s$.
In order to make the concept of left/right-neighbors well-defined, we set $s_0 = s_1$ and $s_{\ell+1} = s_\ell$.
We also denote by $L_G(s)$ the set $N_G(s) \cap L$ containing \textit{leaf-neighbors} (i.e., degree-$1$ neighbors) of $s$ in $G$ and by $L_G[s]$ the set $L_G(s) \cup \{s\}$.

\subsection{Reconfiguration Notation}
\label{sec:reconf-notation}

We denote by $(G, I, J)$ an instance of \textsc{$k$-PVCR} where $I, J$ are $k$-PVCs of $G$.
Imagine that a token is placed on each vertex in a $k$-PVC of a graph $G$. 
A \textit{$\mathsf{TS}$-sequence} in $G$ between $I$ and $J$ is the sequence $\langle I = I_0, I_1, \dots, I_q = J \rangle$ such that for $i \in \{0, \dots, q-1\}$, the set $I_i$ is a $k$-PVC of $G$ and there exists a pair $x_i, y_i \in V(G)$ such that $I_i \setminus I_{i+1} = \{x_i\}$, $I_{i+1} \setminus I_i = \{y_i\}$, and $x_iy_i \in E(G)$.
In other words, $I_{i+1}$ is obtained from $I_i$ by \textit{immediately sliding} a token from $x_i$ to $y_i$ along the edge $x_iy_i$.
In short, $\mathcal{S}$ can be viewed as a (ordered) sequence of either $k$-PVCs or token-slides.
With respect to the latter viewpoint, we say that $\mathcal{S}$ \textit{slides/moves a token $t$ from $u$ to $v$ in $G$} if $t$ is originally placed on $u \in I_0$ and finally on $v \in I_q$ after performing $\mathcal{S}$. 

For a $\mathsf{TS}$-sequence $\mathcal{S} = \langle I_0, I_1, \dots, I_q \rangle$, we denote by $\text{rev}(\mathcal{S})$ the \emph{reverse} of $S$, i.e., the $\mathsf{TS}$-sequence $\langle I_q, \dots, I_1, I_0\rangle$.
For two $\mathsf{TS}$-sequences $\mathcal{S} = \langle I_0, I_1, \dots, I_p \rangle$ and $\mathcal{S}^\prime = \langle I^\prime_0, I^\prime_1, \dots, I^\prime_q \rangle$, if $I_p = I^\prime_0$ then we say that they can be \emph{concatenated} and define their \emph{concatenation} $\mathcal{S} \oplus \mathcal{S}^\prime$ as the $\mathsf{TS}$-sequence $\langle I_0, I_1, \dots, I_p, I^\prime_1, \dots, I^\prime_q \rangle$.
We assume for convenience that if $\mathcal{S}^\prime$ is empty then $\mathcal{S} \oplus \mathcal{S}^\prime = \mathcal{S}^\prime \oplus \mathcal{S} = \mathcal{S}$.

For a $k$-PVC $I$ of $G$, we say that a token $t$ on $u \in I$ is \textit{$(G, I)$-rigid} if it cannot be moved at all, that is, for any $k$-PVC $J$ of $G$ obtained from $I$ via a $\mathsf{TS}$-sequence in $G$, we always have $u \in J$.
In other words, there is no $\mathsf{TS}$-sequence that slides $t$ from $u$ to any of its neighbors in $G$.
If $t$ is \textit{not} $(G, I)$-rigid, we call it a \textit{$(G, I)$-movable} token.
We denote by $\mathcal{R}(G, I)$ the set of all vertices in $G$ where $(G, I)$-rigid tokens are placed.

\subsection{Partitioning Trees}
\label{sec:partitioning-trees}

In this section, we describe a slightly modified version of an algorithm used in~\cite{HoangSY22} for partitioning a given tree $T$.
We remark that this algorithm is also crucial for solving the problem under $\mathsf{TJ}$ and $\mathsf{TAR}$.
A \emph{properly rooted subtree} $T_v$ of a $n$-vertex rooted tree $T$ is a subtree of $T$ induced by the vertex $v$ and all its descendants (with respect to the root $r$) satisfying the following conditions
\begin{enumerate}[(1)]
	\item $T_v$ contains a $k$-path;
	\item $T_v - v$ does not contain a $k$-path.
\end{enumerate} 
The $O(n)$-time algorithm $\mathtt{Partition}(T, k, r)$ (Algorithm~\ref{algo:partition}) takes a tree $T$, an integer $k \geq 3$, and a root vertex $r \in V(T)$ as an input, and returns a minimum $k$-PVC $I(T, k, r)$ and a partition $P(T, k, r)$ of $T$.
In short, it systematically searches for a properly rooted tree $T_v$, decides whether $T_v$ is properly rooted, and if so, add $T_v$ to $P(T, k, r)$ and $v$ to $I(T, k, r)$, and remove $T_v$ from the input tree $T$. 
To check if $T$ contains a properly rooted subtree $T_v$, one can start by assigning $v$ to a vertex of largest \textit{depth} (i.e., distance from $r$) and verify if $T_v$ is properly rooted. If so, the answer is ``yes''. Otherwise, we assign $v$ to its parent and repeat, until a $T_v$ is found (answering ``yes'') or there is nothing to check (answering ``no'').
The number of subtrees in $P(T, k, r)$ is indeed $\psi_k(T)$---the minimum size of a $k$-PVC of $T$.

\begin{algorithm}[!ht]
	\KwIn{A tree $T$ on $n$ vertices rooted at $r$ and an integer $k \geq 3$.}
	\KwOut{A partition $P(T, k, r)$ and a minimum $k$-PVC $I(T, k, r)$ of $T$.}
	\SetArgSty{textbb}   
	$i := 1$\;
	\While{$T$ contains a properly rooted subtree $T_v$ }
	{            
		\If{$T - T_v$ contains a properly rooted subtree}
		{
			$T_i(r) := T_v$; $v_i(r) := v$\;
			$i := i + 1$\;
		}
		\Else{
			$T_i(r) := T$; $v_i(r) := v$\;
		}
		$T := T - T_v$\;
	}
	$P(T, k, r) = \{T_1(r), \dots, T_i(r)\}$\;
	$I(T, k, r) = \{v_1(r), \dots, v_i(r)\}$\;
	\Return $P(T, k, r)$ and $I(T, k, r)$\;
	\caption{$\mathtt{Partition}(T, k, r)$.}
	\label{algo:partition}
\end{algorithm}

\section{Rigid Tokens}
\label{sec:rigid-tokens}

Let $I$ be a $k$-PVC of a $n$-vertex caterpillar $G = (S \cup L, E)$.
In this section, we will first characterize whether a token $u \in I$ is $(G, I)$-rigid.
Using our characterization, one can design a polynomial-time algorithm to find all such tokens when $k \geq 4$.
The following lemma is straightforward.

\begin{lemma}\label{lem:rigid-tokens-in-L}
	Let $I$ be a $k$-PVC ($k \geq 3$) of a $n$-vertex caterpillar $G = (S \cup L, E)$, and let $u \in I$. 
	\begin{enumerate}[(a)]
		\item If $N_G(u) = \emptyset$, the token $t$ on $u$ is always $(G, I)$-rigid.
		\item If $u \in L$, the token $t$ on $u$ is $(G, I)$-rigid if and only if its unique neighbor $v$ satisfies $v \in I$ and the token $t_v$ on $v$ is $(G - u, I \cap V(G - u))$-rigid. 
	\end{enumerate}
\end{lemma}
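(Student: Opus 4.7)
Part (a) is immediate from the definition: if $u$ is isolated in $G$, there is no edge along which the token on $u$ could slide, so $t$ is trivially rigid. For part (b) I would first establish a bookkeeping lemma that drives the whole argument. Since $u$ is a leaf with unique neighbor $v$, every $k$-path of $G$ that contains $u$ must have $u$ as an endpoint and therefore also contain $v$. Consequently, for any $I' \subseteq V(G-u)$, the set $I'$ is a $k$-PVC of $G-u$ if and only if $I' \cup \{u\}$ is a $k$-PVC of $G$. This correspondence lets me translate TS-sequences in $G$ that keep $u$ occupied into TS-sequences in $G-u$, and vice versa, since any slide along an edge of $G$ that does not touch $u$ is a legal slide in $G-u$, and conversely.

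For the forward direction I would argue by contrapositive: assume either $v \notin I$, or else $v \in I$ but $t_v$ is $(G-u,\,I \cap V(G-u))$-movable. In the first case, the argument above shows that $(I \setminus \{u\}) \cup \{v\}$ is still a $k$-PVC of $G$ (every $k$-path previously covered by $u$ contains $v$), so we can slide $t$ directly from $u$ to $v$ in one step, proving $t$ is movable. In the second case, take a TS-sequence $\mathcal{S}$ in $G-u$ from $I \cap V(G-u)$ that moves $t_v$ off of $v$, and lift it to $G$ by adding $u$ to every set along the way; the bookkeeping lemma guarantees that each lifted set is a $k$-PVC of $G$, and every slide of $\mathcal{S}$ remains a legal slide in $G$ because it avoids the vertex $u$. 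The lifted sequence ends in a $k$-PVC of $G$ that still contains $u$ but no longer contains $v$, so a final slide from $u$ to $v$ witnesses that $t$ is movable.

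For the backward direction, suppose for contradiction that $v \in I$, $t_v$ is rigid in $(G-u,\,I \cap V(G-u))$, yet some TS-sequence $\mathcal{S} = \langle I_0,\dots,I_q\rangle$ in $G$ eventually removes $u$. Let $j$ be the smallest index with $u \notin I_j$; since $u$'s only neighbor is $v$, the step from $I_{j-1}$ to $I_j$ must slide a token along the edge $uv$, which forces $v \notin I_{j-1}$. The prefix $\langle I_0,\dots,I_{j-1}\rangle$ keeps $u$ occupied throughout, so by the bookkeeping lemma its pointwise restrictions $I_i \setminus \{u\}$ form a valid TS-sequence in $G-u$ starting at $I \cap V(G-u)$ (which contains $v$) and ending at a $k$-PVC that does not contain $v$. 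This contradicts the assumed rigidity of $t_v$ in $G-u$.

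I do not expect a genuine obstacle in this lemma itself: the only subtlety is verifying the two-way correspondence between $k$-PVCs of $G$ that contain $u$ and $k$-PVCs of $G-u$, and then being careful about which vertex actually slides when $u$ enters or leaves the token set. The real work of Section~\ref{sec:rigid-tokens} is pushed to the spine, for which this leaf-peeling reduction is the setup.
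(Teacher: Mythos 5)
Your proof is correct and follows essentially the same route as the paper: the contrapositive argument for the forward direction (slide $t$ to $v$ directly if $v\notin I$, otherwise lift a $\mathsf{TS}$-sequence from $G-u$ into $G$ and then slide $t$) is exactly the paper's, while the backward direction, which the paper dismisses as trivial, you spell out correctly via the restriction of the prefix sequence to $G-u$.
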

\begin{proof}
	\begin{enumerate}[(a)]
			\item Trivial.
			
			\item The if direction is trivial.
			To show the only-if direction, it is sufficient to show that if either $v \notin I$ or $v \in I$ and the token $t_v$ on $v$ is $(G - u, I \cap V(G - u))$-movable, $t$ can be moved in $G$.
			Indeed, if $v \notin I$, we simply slide $t$ from $u$ to $v$, because any $k$-path covered by $u$ is also covered by $v$.
			On the other hand, if a token $t_v$ is placed on $v \in I$, since $t_v$ is $(G - u, I \cap V(G - u))$-movable, there is a $\mathsf{TS}$-sequence in $G - u$ (which is also in $G$) that moves $t_v$ to one of $v$'s neighbors. 
			Simply apply such a sequence, and then we can move $t$ from $u$ to $v$.
		\end{enumerate}
\qed\end{proof}

Consequently, it remains to characterize whether a token on $u \in I \cap S$ is $(G, I)$-rigid.
Additionally, by Lemma~\ref{lem:rigid-tokens-in-L}(a), it suffices to assume from this point forward that $N_G(u) \neq \emptyset$.

\begin{definition}\label{def:HGIu}
	Let $I$ be a $k$-PVC ($k \geq 3$) of a caterpillar $G = (S \cup L, E)$.
	For $u \in I \cap S$, we define $\mathcal{H}(G, I, u)$ to be a set of induced subgraphs of $G$ such that 
	for each $H \in \mathcal{H}(G, I, u)$,
	\begin{enumerate}[(H.1)]
		\item $u \in V(H)$, for every $v \in V(H) \cap S$, $L_G(v) \subseteq V(H)$, and if $v \neq u$, $L_G[v] \cap I = \emptyset$; and
		\item $H$ contains two $k$-paths $P$ and $Q$ such that $V(P) \cap V(Q)$ is either $\{u\}$ or $\{u, u^\prime\}$ for some $u^\prime \in L_G(u)$ and $((V(P) \cup V(Q)) \setminus \{u\}) \cap I = \emptyset$; and
		\item among all subgraphs of $G$ containing $u$ and satisfying (H.1) and (H.2), $\vert V(H) \cap S \vert$ is minimum.
	\end{enumerate} 
\end{definition}
Intuitively, if $\mathcal{H}(G, I, u) \neq \emptyset$, the token $t$ on $u \in I \cap S$ is $(G, I)$-movable if for every subgraph $H \in \mathcal{H}(G, I, u)$, at least one token from some vertex in $V(G - H) \cap I$ (i.e., ``outside $H$'') can be moved to some vertex in $V(H)$ (i.e., ``inside $H$'') via a $\mathsf{TS}$-sequence.
For example, it can be readily verified that in \figurename~\ref{fig:exa-HIu}(a), $\mathcal{H}(G, I, u) \neq \emptyset$ and in \figurename~\ref{fig:exa-HIu}(b), for each subgraph $H$ of $G$, either (H.1) or (H.2) does not hold, and therefore $\mathcal{H}(G, I, u) = \emptyset$, where $u = s_3$. 

\begin{figure}[!ht]
	\centering
	\begin{adjustbox}{max width=\textwidth}
	\begin{tikzpicture}[every node/.style={draw, thick, circle, minimum size=0.6cm, transform shape}, scale=0.8]
		\foreach \i in {0,1} {
			\foreach \j in {0} {
				\begin{scope}[shift={(8*\i, -3*\j)}]
					\foreach \x in {1,...,5} {
						\node [label=above:$s_{\x}$] (s\x) at (1.5*\x,0) {};
					}
					\draw[thick] (s1) -- (s2) -- (s3) -- (s4) -- (s5);
					\foreach \x/\y in {1/1, 2/2, 3/3.5, 4/4.5, 5/5.5, 6/7, 7/8} {
						\node (l\x) at (\y,-1) {};
					}
					\draw[thick] (s1) -- (l1) (s1) -- (l2);
					\draw[thick] (s3) -- (l3) (s3) -- (l4) (s3) -- (l5);
					\draw[thick] (s5) -- (l6) (s5) -- (l7);
					
					\ifthenelse{\i=0}{
						\node [fill=black, minimum size=0.5cm] at (s1.center) {};
						\node [fill=black, minimum size=0.5cm] at (s3.center) {};
						\node [fill=black, minimum size=0.5cm] at (s5.center) {};
						\node [fill=black, minimum size=0.5cm] at (l3.center) {};
						\node [fill=black, minimum size=0.5cm] at (l6.center) {};
						\node [fill=black, minimum size=0.5cm] at (l7.center) {};
						\begin{scope}[on background layer]
							\draw[very thick, gray] ([xshift=-0.25cm,yshift=0.5cm]s2.west) -- ([xshift=-0.25cm,yshift=-1.5cm]s2.west) -- node [draw=none, below, xshift=-1cm, yshift=0.2cm] {$H^l$} ([xshift=0.3cm,yshift=-1.5cm]s4.west) --  ([xshift=-0.15cm,yshift=0.5cm]s4.west) -- cycle;
							\draw[very thick] ([xshift=0.25cm,yshift=0.4cm]s4.east) -- ([xshift=0.25cm,yshift=-1.4cm]s4.east) node [draw=none, below, xshift=-0.2cm, yshift=0.2cm] {$H^r$} -- ([xshift=0.2cm,yshift=-1.4cm]s2.west) -- ([xshift=0.25cm,yshift=0.4cm]s2.east) -- cycle;
						\end{scope}
					}{
						\node [fill=black, minimum size=0.5cm] at (s1.center) {};
						\node [fill=black, minimum size=0.5cm] at (s3.center) {};
						\node [fill=black, minimum size=0.5cm] at (s5.center) {};
						\node [fill=black, minimum size=0.5cm] at (l3.center) {};
						\node [fill=black, minimum size=0.5cm] at (s4.center) {};
						\node [fill=black, minimum size=0.5cm] at (l4.center) {};
					}
				\end{scope}
			}
		}
		
		\node [draw=none] at (4.5,-1.75) {(a)};
		\node [draw=none] at (12.5,-1.75) {(b)};
	\end{tikzpicture}
	\end{adjustbox}
	\caption{Examples of different $3$-PVCs $I$ (marked with black tokens) of a caterpillar $G$ and the vertex $s_3 = u \in I$ where (a) $\mathcal{H}(G, I, u) = \{H^l, H^r\} \neq \emptyset$ and (b) $\mathcal{H}(G, I, u) = \emptyset$.}
	\label{fig:exa-HIu}
\end{figure}
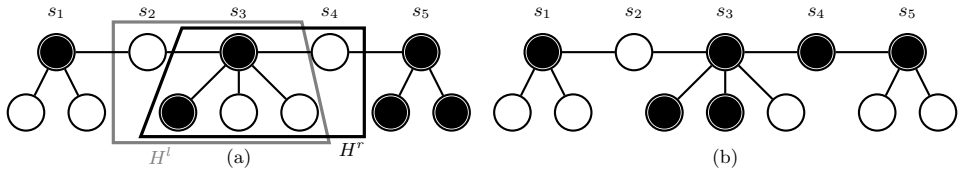

We now define some useful notations which will be used later in several statements.
\begin{definition}\label{def:PGIu}
	Let $I$ be a $k$-PVC ($k \geq 3$) of a caterpillar $G = (S \cup L, E)$, and let $u \in I \cap S$.
	We denote by $\mathcal{P}(G, I, u)$ the set of all $k$-paths in $G$ having either $u$ or one of its leaf-neighbors as an endpoint and none of their vertices other than $u$ is in $I$.
	Observe that $\mathcal{P}(G, I, u)$ can be partitioned into three subsets $\mathcal{P}_l(G, I, u)$, $\mathcal{P}_r(G, I, u)$, and $\mathcal{P}_c(G, I, u)$ where each member is a $k$-path containing $l(u)$, $r(u)$, and only $u$ and some of its leaf-neighbors, respectively.
	We denote by $\mathcal{A}(G, I, u)$ the set of all vertices $v \in I \setminus L_G(u)$ such that $\text{dist}_G(u, v) \leq k$ and $(V(P_{uv}) \setminus \{u, v\}) \cap I = \emptyset$.
\end{definition}

Indeed, the following lemma says that $\mathcal{H}(G, I, u)$ can be determined efficiently.
\begin{lemma}\label{lem:find-HIu}
	Let $I$ be a $k$-PVC ($k \geq 3$) of a $n$-vertex caterpillar $G = (S \cup L, E)$, and let $u \in I \cap S$.
	The set $\mathcal{H}(G, I, u)$ can be found in $O(n)$ time.
\end{lemma}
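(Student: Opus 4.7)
My plan is to reduce the computation of $\mathcal{H}(G, I, u)$ to a constant number of linear scans along the spine of $G$. First, I would establish a structural observation: for any $H \in \mathcal{H}(G, I, u)$, writing $u = s_p$, the set $V(H) \cap S$ is a consecutive spine interval $\{s_a, \dots, s_b\}$ with $a \le p \le b$. This is because (H.2) forces $H$ to contain the connected subgraph $P \cup Q$ sharing $u$, and the spine-restriction of any connected subgraph of a caterpillar is contiguous; the minimality condition (H.3) rules out any additional spine vertices. By (H.1), $V(H)$ is then uniquely determined by the pair $(a, b)$ as $\{s_a, \dots, s_b\} \cup \bigcup_{i=a}^{b} L_G(s_i)$. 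Hence computing $\mathcal{H}(G, I, u)$ reduces to listing the pairs $(a, b)$ that achieve the minimum value of $b - a + 1$.

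Next, I would compute the largest admissible range $[a^\star, b^\star]$ dictated by (H.1): walking leftward from $u$ along the spine, I stop just before the first $s_j$ with $s_j \in I$ or $L_G(s_j) \cap I \neq \emptyset$ to obtain $a^\star$, and symmetrically for $b^\star$. Any $H$ compatible with (H.1) must satisfy $a^\star \le a$ and $b \le b^\star$, and each of these scans takes $O(n)$ time.

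Then I would perform a finite case analysis on the shape of $(P, Q)$ in (H.2). Every $k$-path in a caterpillar has its spine vertices forming a consecutive sub-interval of length between $k - 2$ and $k$, with at most one leaf attached at each end. Combined with the intersection constraint $V(P) \cap V(Q) \in \{\{u\}\} \cup \{\{u, u'\} : u' \in L_G(u)\}$, this forces $P$ and $Q$ to leave $u$ in different spine directions: one uses $l(u)$ and the other uses $r(u)$. For each of the $O(1)$ shapes---parameterized by whether $u$ is an endpoint of the path or the near-end is a leaf of $u$, and whether the far end is a spine vertex or a leaf of one---I would read off the required leftward extension $d_l$ and rightward extension $d_r$ from $u$. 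Shapes that would require a leaf of $u$ already lying in $I$ are discarded, which is decided using $|L_G(u) \setminus I|$; note that the shared vertex $u'$ in the $\{u,u'\}$ case must itself lie in $L_G(u) \setminus I$ by the last clause of (H.2).

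Finally, I would return every subgraph induced by $\{s_{p - d_l}, \dots, s_{p + d_r}\} \cup \bigcup_i L_G(s_i)$ for each shape that minimizes $d_l + d_r$ subject to $d_l \le p - a^\star$ and $d_r \le b^\star - p$; if no shape is feasible, then $\mathcal{H}(G, I, u) = \emptyset$. The main obstacle is the case analysis, since each of the finitely many shape combinations must be checked separately and the feasibility depends delicately on how many leaves of $u$ are token-free and on whether the required extension collides with a blocked spine vertex; once this enumeration is tabulated, the remaining work is just the two $O(n)$ scans plus $O(1)$ bookkeeping, so that $|\mathcal{H}(G, I, u)| = O(1)$ and the overall running time is $O(n)$.
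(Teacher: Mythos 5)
Your overall strategy---reduce $H$ to a spine interval around $u$, bound the admissible interval by a left and a right $O(n)$ scan driven by (H.1), and then minimize the interval length over a constant number of path shapes for (H.2)---is a legitimate alternative to the paper's procedure (which instead grows a candidate outward from the $2k-5$-spine-vertex core and tests at most seven induced subgraphs). For $k \geq 4$ your plan would work. However, there is a genuine gap for $k = 3$, which the lemma covers: your claim that the intersection constraint in (H.2) ``forces $P$ and $Q$ to leave $u$ in different spine directions: one uses $l(u)$ and the other uses $r(u)$'' is false. A $3$-path may consist of $u$ and two of its leaf-neighbors (the set $\mathcal{P}_c(G,I,u)$ in the paper's Definition~\ref{def:PGIu}), and such a path is compatible with (H.2): two such paths can meet in $\{u\}$ or in $\{u,u'\}$ with $u' \in L_G(u)$, and one central path can be paired with one path through $l(u)$ or $r(u)$. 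The paper's Lemma~\ref{lem:HIu-size} and \figurename~\ref{fig:exa-HIu}(a) exhibit exactly this situation: when $|L_G(u)\setminus I| = 2$ the two minimal members $H^l, H^r$ each use one central path and one one-sided path, and when $|L_G(u)\setminus I| \geq 3$ the unique minimal member is $G[L_G[u]]$ with no spine extension at all.

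Concretely, your shape enumeration would, in these cases, either report $\mathcal{H}(G,I,u) = \emptyset$ when it is nonempty (e.g., when $G[L_G[u]]$ alone satisfies (H.1)--(H.3)) or return a single two-sided interval $\{s_{p-d_l},\dots,s_{p+d_r}\}$ that is strictly larger in spine count than the true minimizers, violating (H.3). To repair the argument you must add the central shapes to your case analysis: a path lying entirely in $L_G[u]$ contributes $d_l = d_r = 0$ and is feasible iff $k = 3$ and enough vertices of $L_G(u)$ are token-free ($\geq 2$ for one such path, $\geq 3$ for two, with the shared vertex $u'$ counted appropriately); this is precisely the bookkeeping the paper performs by splitting $\mathcal{P}(H,I,u)$ into $\mathcal{P}_l$, $\mathcal{P}_r$, and $\mathcal{P}_c$. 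With that addition (and the $O(1)$ check that a shape whose far end is a leaf actually has a leaf available at the corresponding spine vertex), your two-scan formulation is sound and achieves the same $O(n)$ bound.
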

\begin{proof}
	First of all, if $\mathcal{H}(G, I, u) \neq \emptyset$, for each $H \in \mathcal{H}(G, I, u)$, we claim that $2k - 5 \leq \vert V(H) \cap S \vert \leq 2k - 1$.
	To see this, note that two $k$-paths satisfying (H.2) contain exactly one common vertex in $S$, namely $u$.
	Moreover, a $k$-path contains at least $k - 2$ and at most $k$ vertices in $S$.
	This minimum (resp., maximum) value can be obtained when two endpoints of the $k$-path are of degree $1$ (resp., more than $1$) in $G$.
	Then, $\vert V(H) \cap S \vert \geq 2(k-2) - 1 = 2k-5$ and $\vert V(H) \cap S \vert \leq 2k - 1$.
	
	From the definition, for every $H \in \mathcal{H}(G, I, u)$, $H$ contains $L_G[u] = \{u\} \cup L_G(u)$.
	It follows that to find $\mathcal{H}(G, I, u)$, it suffices to consider subgraphs of $G$ containing $L_G[u]$.
	Next, we show that given any subgraph $H$ of $G$ containing $L_G[u]$, (H.1) and (H.2) can be verified in $O(n)$ time.
	Indeed, it takes $O(\vert V(H) \cap S \vert) = O(n)$ time to verify (H.1).
	For verifying (H.2), we consider the set $\mathcal{P}(H, I, u)$.
	If $\mathcal{P}(H, I, u)$ is empty, clearly (H.2) is not satisfied.
	Otherwise, observe that for any two $k$-paths $P, Q$ in $H$ satisfying (H.2), $P$ and $Q$ cannot be both in either $\mathcal{P}_l(H, I, u)$ or $\mathcal{P}_r(H, I, u)$, otherwise they both respectively contain either $l(u)$ or $r(u)$, which contradicts (H.2).
	If both $\mathcal{P}_l(H, I, u)$ and $\mathcal{P}_r(H, I, u)$ are non-empty, (H.2) is satisfied: arbitrarily taking any $P$ from $\mathcal{P}_l(H, I, u)$ and $Q$ from $\mathcal{P}_r(H, I, u)$ would be sufficient.
	If both $\mathcal{P}_l(H, I, u)$ and $\mathcal{P}_r(H, I, u)$ are empty, any two $k$-paths $P, Q$ in $H$ satisfying (H.2), if they exist, must be both in $\mathcal{P}_c(H, I, u)$, and therefore have their endpoints in $L_H(u) = L_G(u)$.
	In this case, one can verify that (H.2) is satisfied if and only if $\vert L_G(u) \setminus I \vert \geq 3$.
	If exactly one of $\mathcal{P}_l(H, I, u)$ and $\mathcal{P}_r(H, I, u)$ is empty, say $\mathcal{P}_l(H, I, u)$, it follows that (H.2) is satisfied if and only if $\mathcal{P}_c(H, I, u)$ is non-empty: the only-if direction is trivial, and the if direction is proved by arbitrarily taking $P$ from $\mathcal{P}_l(H, I, u)$ and $Q$ from $\mathcal{P}_c(H, I, u)$.
	In each of the above cases, we have shown how to verify (H.2) in $O(1)$ time.
	Therefore, the running time of our verification for (H.2) depends on the time for constructing $\mathcal{P}(H, I, u)$, which can be done by checking $O((1 + \vert L_G(u) \vert)\max_{v \in V(G)}\deg_G(v)) = O(n)$ $k$-paths in $G$ which contains either $u$ or one of its leaf-neighbors as an endpoint.
	(Since $u$ is fixed, $\vert L_G(u) \vert$ is a constant.)
	
	Finally, we describe how to find $\mathcal{H}(G, I, u)$ in $O(n)$ time. 
	At the beginning, $\mathcal{H}(G, I, u) = \emptyset$.
	We initially start with the subgraph $H$ induced by vertices in $S$ of distance at most $k-3$ from $u$ and their leaf-neighbors.
	Note that $\vert V(H) \cap S \vert = 2k - 5$, and therefore none of $H$'s proper subgraphs is in $\mathcal{H}(G, I, u)$.
	As a result, if $H$ satisfies both (H.1) and (H.2), it certainly satisfies (H.3), and therefore we can stop and output $\mathcal{H}(G, I, u) = \{H\}$.
	Otherwise, we find the vertices $u_l$ and $u_r$ in $V(G - H) \cap S$ such that $r(u_l)$ and $l(u_r)$ are in $V(H)$, respectively.
	If none of $u_l$ and $u_r$ exists, we stop and output $\mathcal{H}(G, I, u) = \emptyset$.
	Otherwise, let $H^l$ (resp., $H^r$) be the graph obtained from $H$ by adding $u_l$ (resp., $u_r$), if it exists, and its corresponding leaf-neighbors and incident edges in $G$.
	If $u_l$ (resp., $u_r$) does not exist, we simply set $H^l = H$ (resp., $H^r = H$).
	Note that $\vert V(H^l) \cap S \vert = \vert V(H^r) \cap S \vert = 2k - 4$.
	As before, if either $H^l$ or $H^r$ satisfies (H.1) and (H.2), it immediately satisfies (H.3) because $H \notin \mathcal{H}(G, I, u)$, and therefore we can add it to $\mathcal{H}(G, I, u)$.
	If either $H^l$ or $H^r$ (or both) is in $\mathcal{H}(G, I, u)$, we stop and output $\mathcal{H}(G, I, u)$.
	Otherwise, we update $H$ by adding to it both $u_l$ and $u_r$ and their corresponding leaf-neighbors and incident edges, and repeat the above process.
	We stop the process when either $\mathcal{H}(G, I, u)$ is output or one of $\vert V(H) \cap S \vert$, $\vert V(H^l) \cap S \vert$, and $\vert V(H^r) \cap S \vert$ is larger than $2k - 1$.
	Our algorithm requires verifying (H.1) and (H.2) for at most seven subgraphs of $G$ containing $L_G[u]$ (e.g., see \figurename~\ref{fig:find-HIu} for $k=3$), and therefore runs in $O(n)$ time.
	In total, finding $\mathcal{H}(G, I, u)$ takes $O(n)$ time.
	
	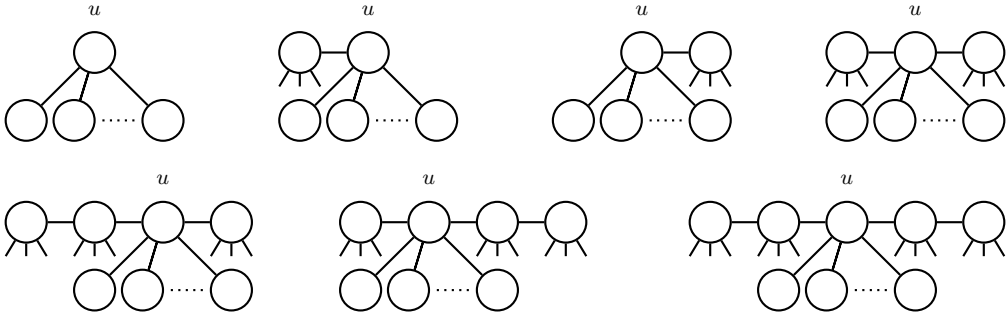
\begin{figure}[!ht]
		\centering
		\begin{adjustbox}{max width=\textwidth}
			\begin{tikzpicture}[every node/.style={draw, thick, circle, minimum size=0.6cm, transform shape}, scale=0.9]
				\foreach \i in {0,...,3} {
					\begin{scope}[shift={(4*\i, 0)}]
						\node [label=above:$u$] (s1) at (1, 0) {};
						\node (l11) at (0, -1) {};
						\node (l12) at (0.7, -1) {};
						\node (l13) at (2, -1) {};
						\draw [thick] (s1) -- (l11) (s1) -- (l12) -- (s1) -- (l13);
						\draw[thick, dotted] ([xshift={0.1cm}]l12.east) -- ([xshift={-0.1cm}]l13.west);
						
						\ifthenelse{\NOT \i=0 \AND \NOT \i=2}{
							\node (s2) at (0,0) {};
							\draw[thick] (s1) -- (s2);
							\coordinate (l21) at (-0.3,-0.5) {};
							\coordinate (l22) at (0,-0.5) {};
							\coordinate (l23) at (0.3,-0.5) {};
							\draw[thick] (s2) -- (l21) (s2) -- (l22) (s2) -- (l23);
						}{}
						\ifthenelse{\NOT \i=0 \AND \NOT \i=1}{
							\node (s3) at (2,0) {};
							\draw[thick] (s1) -- (s3);
							\coordinate (l31) at (1.7,-0.5) {};
							\coordinate (l32) at (2,-0.5) {};
							\coordinate (l33) at (2.3,-0.5) {};
							\draw[thick] (s3) -- (l31) (s3) -- (l32) (s3) -- (l33);
						}{}
					\end{scope}			
				}
				
				\foreach \i in {0,...,2} {
					\begin{scope}[shift={(5*\i+1, -2.5)}]
						\node [label=above:$u$] (s1) at (1, 0) {};
						\node (l11) at (0, -1) {};
						\node (l12) at (0.7, -1) {};
						\node (l13) at (2, -1) {};
						\draw [thick] (s1) -- (l11) (s1) -- (l12) -- (s1) -- (l13);
						\draw[thick, dotted] ([xshift={0.1cm}]l12.east) -- ([xshift={-0.1cm}]l13.west);
						
						\node (s2) at (0,0) {};
						\draw[thick] (s1) -- (s2);
						\coordinate (l21) at (-0.3,-0.5) {};
						\coordinate (l22) at (0,-0.5) {};
						\coordinate (l23) at (0.3,-0.5) {};
						\draw[thick] (s2) -- (l21) (s2) -- (l22) (s2) -- (l23);
						
						\node (s3) at (2,0) {};
						\draw[thick] (s1) -- (s3);
						\coordinate (l31) at (1.7,-0.5) {};
						\coordinate (l32) at (2,-0.5) {};
						\coordinate (l33) at (2.3,-0.5) {};
						\draw[thick] (s3) -- (l31) (s3) -- (l32) (s3) -- (l33);
						
						\ifthenelse{\NOT \i=1}{
							\node (s4) at (-1,0) {};
							\draw[thick] (s2) -- (s4);
							\coordinate (l41) at (-1.3,-0.5) {};
							\coordinate (l42) at (-1,-0.5) {};
							\coordinate (l43) at (-0.7,-0.5) {};
							\draw[thick] (s4) -- (l41) (s4) -- (l42) (s4) -- (l43);
						}{}
						\ifthenelse{\NOT \i=0}{
							\node (s5) at (3,0) {};
							\draw[thick] (s3) -- (s5);
							\coordinate (l51) at (2.7,-0.5) {};
							\coordinate (l52) at (3,-0.5) {};
							\coordinate (l53) at (3.3,-0.5) {};
							\draw[thick] (s5) -- (l51) (s5) -- (l52) (s5) -- (l53);
						}{}
					\end{scope}
					\ifthenelse{\i=0}{
						\hspace*{-1cm}
					}{
						\hspace*{1cm}
					}
				}
			\end{tikzpicture}
		\end{adjustbox}
		\caption{Seven possible candidates for verifying (H.1) and (H.2) when $k = 3$.}
		\label{fig:find-HIu}
	\end{figure}
\qed\end{proof}

Moreover, if $\mathcal{H}(G, I, u) \neq \emptyset$, we can further determine its size as follows.
\begin{lemma}\label{lem:HIu-size}
	Let $I$ be a $k$-PVC ($k \geq 3$) of a $n$-vertex caterpillar $G = (S \cup L, E)$, and let $u \in I \cap S$.
	Then, $0 \leq \vert \mathcal{H}(G, I, u) \vert \leq 2$.
	Moreover, suppose that $\mathcal{H}(G, I, u) \neq \emptyset$.
	Then, $\vert \mathcal{H}(G, I, u) \vert = 2$ if and only if $k = 3$, $\vert N_G(u) \cap S \vert = 2$, $\vert L_G(u) \setminus I \vert = 2$, and $(L_G[l(u)] \cup L_G[r(u)]) \cap I = \emptyset$.
\end{lemma}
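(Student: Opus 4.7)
The plan is to reduce counting elements of $\mathcal{H}(G, I, u)$ to counting admissible contiguous spine intervals, and then split on $k$. I first observe that by (H.1), $H = G\bigl[(V(H) \cap S) \cup L_G(V(H) \cap S)\bigr]$, so $H$ is fully determined by $V(H) \cap S$. Moreover $V(H) \cap S$ must be a contiguous interval containing $u = s_i$: any gap $s_j \notin V(H) \cap S$ would allow me to discard the side of $s_j$ not containing $u$ without affecting (H.2) (since every $k$-path in (H.2) contains $u$), contradicting (H.3). Hence, writing $V(H) \cap S = [a, b]$, every $H \in \mathcal{H}$ is encoded by an interval containing $i$, and all such intervals share a common length $m^\star = b - a + 1$ by (H.3).

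For $k \geq 4$, each $k$-path in the caterpillar occupies a contiguous spine interval of length at least $k - 2 \geq 2$, since leaves appear only as endpoints of a path. Applied to the two paths $P, Q$ of (H.2), the condition $V(P) \cap V(Q) \subseteq \{u\} \cup L_G(u)$ forces their spine intersection to be $\{u\}$; since both spines have length $\geq 2$, this forces (up to swapping) $V(P) \cap S = [c_P, i]$ with $c_P < i$ and $V(Q) \cap S = [i, d_Q]$ with $d_Q > i$. Let $\lambda^\star$ (respectively $\rho^\star$) be the smallest value of $i - c_P$ (respectively $d_Q - i$) for which a valid $P$ (respectively $Q$) exists in $G$; whether $\lambda \in \{k - 3, k - 2, k - 1\}$ is realisable depends on the presence of free leaves at $s_{i - \lambda}$, $s_{i + \rho}$, or $u$. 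Then $m^\star = \lambda^\star + \rho^\star + 1$, attained uniquely by $V(H) \cap S = [i - \lambda^\star, i + \rho^\star]$, so $|\mathcal{H}| \leq 1$ in this case.

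The main obstacle is $k = 3$, where a $3$-path may live entirely in $L_G[u]$, giving rise to more candidate intervals. I case-split on $|L_G(u) \setminus I|$. If it is at least $3$, two leaf-$u$-leaf paths sharing one $u$-leaf witness (H.2) inside $H_0 = G[L_G[u]]$, so $m^\star = 1$ and $|\mathcal{H}| = 1$. If it equals $2$, then $m^\star = 1$ is impossible; at $m^\star = 2$ the only candidates are $H^l$ and $H^r$, and each lies in $\mathcal{H}$ iff its additional spine neighbour of $u$ exists and has a token-free closed leaf-neighbourhood---having both in $\mathcal{H}$ is precisely the stated characterization of $|\mathcal{H}| = 2$; for larger $m^\star$ the closed leaf-neighbourhood of at least one of $l(u), r(u)$ already contains a token, so the required spine interval cannot extend to that side, and the analysis collapses to $|\mathcal{H}| \leq 1$. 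Finally, if $|L_G(u) \setminus I| \leq 1$, then (H.2) fails for $m^\star \leq 2$, and at $m^\star = 3$ a direct enumeration of the $3$-paths through $u$ in each candidate shows that the asymmetric intervals $\{u, r(u), r^2(u)\}$ and $\{l^2(u), l(u), u\}$ only admit pairs sharing $u$ together with the unique spine neighbour of $u$ in the interval (violating the intersection requirement), so only $[i - 1, i + 1]$ can appear in $\mathcal{H}$; an analogous enumeration handles the exceptional configurations forcing $m^\star \geq 4$, giving $|\mathcal{H}| \leq 1$.

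For the ``if'' direction of the equality, under the stated hypotheses let $L_G(u) \setminus I = \{v_1, v_2\}$; then $P = l(u)\,u\,v_1$ and $Q = v_1\,u\,v_2$ witness (H.2) for $H^l$ with $V(P) \cap V(Q) = \{u, v_1\}$ and $v_1 \in L_G(u)$, symmetrically for $H^r$, while $|L_G(u) \setminus I| = 2 < 3$ rules out $m = 1$, yielding (H.3). The ``only if'' direction is the contrapositive of the case analysis above.
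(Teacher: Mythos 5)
Your argument is correct in substance and lands on the same underlying dichotomy as the paper's proof, but it gets there by a genuinely different route. The paper obtains $\vert \mathcal{H}(G,I,u) \vert \leq 2$ and the shape of the two possible members $H^l, H^r$ from the explicit growing procedure of Lemma~\ref{lem:find-HIu} (both arise from a common core $H$ by adding one spine vertex on the left or on the right), and then splits on whether one of the four witnessing $k$-paths lies in $\mathcal{P}_c(G,I,u)$: if none does, the two ``inner'' paths already live in the core $H$, contradicting (H.3); if one does, $k=3$ is forced and the remaining conditions follow from (H.1). You instead work directly from (H.1)--(H.3), encoding each $H$ by a contiguous spine interval and, for $k \geq 4$, showing that the minimizing interval $[i-\lambda^\star, i+\rho^\star]$ is unique because the left and right extents minimize independently; this is arguably cleaner than the paper's Case~1 and does not lean on the algorithm of Lemma~\ref{lem:find-HIu} at all, at the price of redoing the contiguity and small enumerations that the algorithm gives for free. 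Two details should be tightened. First, $\lambda^\star$ and $\rho^\star$ must be defined so that the interval $[i-\lambda, i]$ not only supports a valid $k$-path but also satisfies the (H.1) requirement $L_G[v] \cap I = \emptyset$ for \emph{every} spine vertex $v \neq u$ in it: a path with $(V(P)\setminus\{u\}) \cap I = \emptyset$ may still pass through a spine vertex having an off-path leaf in $I$, in which case no admissible $H$ contains that interval. With this adjustment the monotonicity and uniqueness argument goes through unchanged. Second, the enumeration for $k=3$, $\vert L_G(u) \setminus I \vert \leq 1$, $m^\star \geq 4$ is only gestured at and should be written out, although the claim is true (on a blocked side no interval may extend at all, and all candidates extending only one way share the corresponding spine neighbour of $u$, so either the extent argument applies or $\mathcal{H}(G,I,u) = \emptyset$). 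Your ``if'' direction is essentially identical to the paper's.
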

\begin{proof}
	It follows from the proof of Lemma~\ref{lem:find-HIu} that $0 \leq \vert \mathcal{H}(G, I, u) \vert \leq 2$.
	Now, suppose that $\mathcal{H}(G, I, u) \neq \emptyset$.
	It remains to show that $\vert \mathcal{H}(G, I, u) \vert = 2$ if and only if $k = 3$, $\vert N_G(u) \cap S \vert = 2$, $\vert L_G(u) \setminus I \vert = 2$, and $(L_G[l(u)] \cup L_G[r(u)]) \cap I = \emptyset$.
	
	We now show the only-if direction.
	From the construction of $\mathcal{H}(G, I, u)$ in Lemma~\ref{lem:find-HIu}, we can assume w.l.o.g that $\mathcal{H}(G, I, u)$ has exactly two members $H^l$ and $H^r$, each of which satisfies (H.1)--(H.3).
	(For example, see \figurename~\ref{fig:exa-HIu}(a).)
	By definition, $H^l$ is not a subgraph of $H^r$ and vice versa.
	Moreover, $H^l$ (resp., $H^r$) is obtained from a subgraph $H$ of $G$ by adding the vertex $u_l$ (resp., $u_r$) and its leaf-neighbors, where $u_l \in V(G - H) \cap S$ (resp., $u_r \in V(G - H) \cap S$) is such that $r(u_l) \in V(H)$ (resp., $l(u_r) \in V(H)$).
	In particular, $V(H) = V(H^l) \cap V(H^r)$, $E(H) = E(H^l) \cap E(H^r)$, $L_G[u_l] = V(H^l) \setminus V(H^r)$, and $L_G[u_r] = V(H^r) \setminus V(H^l)$.
	Therefore, $H^l$ and $H^r$ respectively contain $l(u)$ and $r(u)$, which implies $\vert N_G(u) \cap S \vert = 2$.
	Let $P_{H^l}, Q_{H^l}$ (resp., $P_{H^r}, Q_{H^r}$) be two $k$-paths in $H^l$ (resp., $H^r$) satisfying (H.2).
	As before, note that $P_{H^l}$ and $Q_{H^l}$ ($P_{H^r}$ and $Q_{H^r}$) cannot be both in either $\mathcal{P}_l(G, I, u)$ or $\mathcal{P}_r(G, I, u)$.
	
	\begin{itemize}
		\item \textbf{Case~1: None of $P_{H^l}, Q_{H^l}, P_{H^r}, Q_{H^r}$ is in $\mathcal{P}_c(G, I, u)$.}
		We assume w.l.o.g that $P_{H^l}, P_{H^r} \in \mathcal{P}_l(G, I, u)$, $Q_{H^l}, Q_{H^r} \in \mathcal{P}_r(G, I, u)$.
		Note that $u_l \notin V(P_{H^r})\allowbreak \subseteq V(H^r)$ and $u_r \notin V(Q_{H^l}) \subseteq V(H^l)$.
		One can verify that $P_{H^r}$ and $Q_{H^l}$ are in $H$ and they both satisfy (H.2).
		Moreover, since $H^l$ and $H^r$ satisfy (H.1), so does $H$.
		As a result, $H$ satisfies both (H.1) and (H.2), and $\vert V(H) \cap S \vert < \vert V(H^l) \cap S \vert = \vert V(H^r) \cap S \vert$.
		This contradicts (H.3), therefore this case cannot happen.
		
		\item \textbf{Case~2: One of $P_{H^l}, Q_{H^l}, P_{H^r}, Q_{H^r}$ is in $\mathcal{P}_c(G, I, u)$.}
		W.l.o.g, assume that $P_{H^l} \in \mathcal{P}_c(G, I, u)$.
		By definition of $\mathcal{P}_c(G,I,u)$, the vertices of $P_{H^l}$ are in $L_G[u]$, which implies $k = 3$.
		Additionally, since $P_{H^l}$ satisfies (H.2), none of its endpoints is $u$ and therefore they are both not in $I$.
		Then, $\vert L_G(u) \setminus I \vert \geq 2$.
		Additionally, if $\vert L_G(u) \setminus I \vert \geq 3$, there must be a $3$-path $Q \neq P_{H^l}$ in $\mathcal{P}_c(G, I, u)$.
		One can verify that the graph $H^\star = G[L_G[u]]$ satisfies (H.1)--(H.3): $H^\star$ obviously satisfies (H.1) and (H.3); for (H.2), take the $3$-paths $P_{H^l}$ and $Q$.
		From the construction of $\mathcal{H}(G, I, u)$ in Lemma~\ref{lem:find-HIu}, it follows that $\mathcal{H}(G, I, u) = \{H^\star\}$, which contradicts our assumption that $\vert \mathcal{H}(G, I, u) \vert = 2$.
		Therefore, $\vert L_G(u) \setminus I \vert = 2$.
		Since $k = 3$ and $\vert L_G(u) \setminus I \vert = 2$, $H^\star$ does not satisfy (H.2), and therefore $H^\star \notin \mathcal{H}(G, I, u)$. 
		The construction of $\mathcal{H}(G, I, u)$ in Lemma~\ref{lem:find-HIu} implies that $H^l$ (resp., $H^r$) contains $l(u)$ (resp., $r(u)$) and their leaf-neighbors, and therefore by (H.1), we have $(L_G[l(u)] \cup L_G[r(u)]) \cap I = \emptyset$.
	\end{itemize}
	
	It remains to show the if direction.
	Since $k = 3$ and $\vert L_G(u) \setminus I \vert = 2$, there is a unique $3$-path $P = xuy$ in $\mathcal{P}_c(G, I, u)$, where $x, y \in L_G(u) \setminus I$.
	As a result, $H^\star = G[L_G[u]] \notin \mathcal{H}(G, I, u)$.
	Let us consider the graph $H^l$ (resp., $H^r$) obtained from $H^\star$ by adding $l(u)$ (resp., $r(u)$), its leaf-neighbors, and the corresponding incident edges.
	(Recall that we assumed $\vert N_G(u) \cap S \vert = 2$, which means $H^l$ and $H^r$ are distinct.)
	One can verify that $H^l$ satisfies (H.1)--(H.3): (H-1) and (H-3) are trivial; for (H.2), take the $3$-paths $P = xuy$ and $Q = xul(u)$. 
	(Recall that we assumed $L_G[l(u)] \cap I = \emptyset$.)
	Similarly, so does $H^r$.
	Therefore, the construction in Lemma~\ref{lem:find-HIu} will output $\mathcal{H}(G, I, u) = \{H^l, H^r\}$.
\qed\end{proof}

Let $I$ be a $k$-PVC ($k \geq 3$) of a caterpillar $G = (S \cup L, E)$.
We now characterize if a token on $u \in I \cap S$ is $(G, I)$-rigid.
The following simple observation will be used implicitly in several arguments:  if the token $t$ on $u \in I$ is $(G, I)$-rigid, then it is also $(G, J)$-rigid for every $k$-PVC $J$ that is reachable from $I$ via a $\mathsf{TS}$-sequence.

\begin{lemma}\label{lem:rigid-tokens-in-S}
	Let $I$ be a $k$-PVC ($k \geq 3$) of a $n$-vertex caterpillar $G = (S \cup L, E)$, and let $u \in I \cap S$. 
	Then, the token $t$ on $u$ is $(G, I)$-rigid if and only if one of the following conditions holds:
	\begin{enumerate}[(a)]
		\item $N_G(u) \subseteq I$ and every token $t_v$ on $v \in N_G(u)$ is $(G - u, I \cap V(G - u))$-rigid.
		\item $\mathcal{H}(G, I, u) \neq \emptyset$ and 
		either
		\begin{enumerate}[(b.1)]
			\item $G[L_G[u]] \in \mathcal{H}(G, I, u)$; or
			\item $\mathcal{A}(G, I, u) = \emptyset$; or
			\item for each $v \in \mathcal{A}(G, I, u)$, the token $t_v$ on $v$ is $(G - u, I \cap V(G - u))$-rigid; or
			\item there is at least one subgraph $H$ in $\mathcal{H}(G, I, u)$ satisfies that for every $(G - u, I \cap V(G - u))$-movable token $t_v$ on a vertex $v \in \mathcal{A}(G, I, u)$, there is no $\mathsf{TS}$-sequence in $G - u$ that slides either $t_v$ or one of the tokens placed on some vertices in $L_G(v)$ (if exist) to some vertex in $V(H)$.
		\end{enumerate}
	\end{enumerate}
\end{lemma}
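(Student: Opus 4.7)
The plan is to establish both implications separately, with the ``if'' direction proceeding by a direct blocking argument for each listed condition and the ``only if'' direction proceeding by constructing an explicit $\mathsf{TS}$-sequence that slides $t$ whenever none of (a) or (b.1)--(b.4) holds.

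For the ``if'' direction under case (a), I will exploit the fact that until the moment $t$ first moves, every token-slide in any $\mathsf{TS}$-sequence must avoid $u$, and hence corresponds to a slide in $(G - u, I \cap V(G - u))$; the resulting vertex set together with the still-occupied $u$ is a $k$-PVC of $G$ if and only if its restriction to $V(G - u)$ is a $k$-PVC of $G - u$, since every $k$-path through $u$ is covered by $t$. Because every neighbor of $u$ carries a rigid token in $G - u$, no neighbor can ever be emptied, so $t$ has no legal slide. Under case (b), I will use (H.1)--(H.2) to note that each $H \in \mathcal{H}(G, I, u)$ hosts two $k$-paths $P, Q$ with $V(P) \cap V(Q) \subseteq \{u, u'\}$ (for a possible leaf $u' \in L_G(u)$) whose vertices in $I$ are only $u$; any slide of $t$ off $u$ leaves at least one of $P, Q$ uncovered, so a new token must enter $V(H)$. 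Each of (b.1)--(b.4) forbids such an entry: (b.1) traps $H$ inside the star $G[L_G[u]]$, which is accessible only through $u$; (b.2) says no eligible nearby source exists at all; (b.3) makes every such source rigid in $G - u$; and (b.4) explicitly witnesses a specific $H$ for which no movable token on some $v \in \mathcal{A}(G, I, u)$ (or on a leaf of $v$) can be routed into $V(H)$.

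For the ``only if'' direction, I will suppose neither (a) nor (b) holds and construct a valid $\mathsf{TS}$-sequence that moves $t$. If $\mathcal{H}(G, I, u) = \emptyset$, then Definition~\ref{def:HGIu} and the case analysis in Lemma~\ref{lem:find-HIu} imply that the critical $k$-paths through $u$ are concentrated in a single ``side'' of $u$; using the failure of (a) to find either an empty neighbor of $u$ or a movable neighboring token, $t$ can be slid directly (or after a short preparatory shuffle) without losing coverage. If $\mathcal{H}(G, I, u) \neq \emptyset$ but none of (b.1)--(b.4) holds, then for every $H \in \mathcal{H}(G, I, u)$ there exist a vertex $v \in \mathcal{A}(G, I, u)$ carrying a token movable in $G - u$ and a $\mathsf{TS}$-sequence in $G - u$ depositing that token (or a token on some $w \in L_G(v)$) into $V(H)$. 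Prepending these routings keeps $u$ occupied and therefore preserves the $k$-PVC property of $G$; once tokens populate $V(H)$ for each $H$, the critical $k$-paths are covered and $t$ may finally slide off toward the appropriate neighbor.

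The main obstacle will be the case $\vert \mathcal{H}(G, I, u) \vert = 2$, where two distinct subgraphs $H^l, H^r$ each need an incoming token and the preliminary routings must be orchestrated compatibly without ever uncovering a $k$-path elsewhere. Lemma~\ref{lem:HIu-size} confines this difficulty to $k = 3$ under strong structural constraints, but since the present lemma is stated for $k \geq 3$ I must handle it explicitly, chiefly by exploiting the $L_G(v)$ clause in (b.4) (which widens the set of tokens eligible for routing) together with Lemma~\ref{lem:rigid-tokens-in-L}, Definition~\ref{def:PGIu}, and the minimality clause (H.3) to localize the effect of each routing move to a region disjoint from the other $H$.
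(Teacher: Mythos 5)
Your overall strategy matches the paper's: contraposition for the only-if direction, lifting $\mathsf{TS}$-sequences from $G-u$ to $G$ by keeping $u$ occupied, and routing an outside token into $V(H)$ before releasing $t$. However, the two places where you defer to a gesture are exactly where the paper does its real work, and as written they are gaps. First, in the case $\mathcal{H}(G,I,u)=\emptyset$, $N_G(u)\nsubseteq I$, and $\mathcal{P}(G,I,u)\neq\emptyset$, ``a short preparatory shuffle'' is not an argument: an empty neighbor of $u$ need not be a legal destination, since sliding $t$ there can uncover a $k$-path on the other side of $u$. The paper's treatment is a genuine case analysis over the seven candidate subgraphs from Lemma~\ref{lem:find-HIu}: any candidate satisfying (H.2) must violate (H.1), so one first slides leaf tokens up to the offending spine vertices to destroy every (H.2)-witness, then shows $\mathcal{P}(G,I,u)$ collapses to exactly one of $\mathcal{P}_l$, $\mathcal{P}_r$, $\mathcal{P}_c$, and the identity of that set dictates which neighbor $t$ may move to. Nothing in your sketch identifies the correct target neighbor or shows the shuffle ends in a configuration where $t$ is free. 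Second, in the case $\mathcal{H}(G,I,u)\neq\emptyset$ with (b.1)--(b.4) all failing, ``once tokens populate $V(H)$ \dots $t$ may finally slide off'' skips the step that legalizes the slide: the routed token lands on some $w\in V(H)$, and you must argue (as the paper does via the minimality clause (H.3) together with the failure of (b.1)) that $w$ may be assumed to lie on one of the two witnessing $k$-paths, say $V(P_H)$, so that $t$ can then escape toward a neighbor on $V(Q_H)$; a token deposited elsewhere in $V(H)$ need not cover the paths that pin $t$.

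Two smaller points. Your if-direction claim that any slide of $t$ off $u$ leaves one of $P,Q$ uncovered fails literally when $V(P)\cap V(Q)=\{u,u'\}$ and $t$ slides to the leaf $u'$: both $P$ and $Q$ remain covered, and the obstruction is instead an uncovered $k$-path inside the $(2k-3)$-vertex path $(V(P)\cup V(Q))\setminus\{u'\}$. Also, the combination ``$N_G(u)\subseteq I$ with a movable neighboring token'' cannot coexist with $\mathcal{H}(G,I,u)\neq\emptyset$, since (H.1) already fails when all neighbors of $u$ carry tokens; observing this eliminates one of your contrapositive cases outright. Your flagged obstacle about $\vert\mathcal{H}(G,I,u)\vert=2$ is well spotted and, by Lemma~\ref{lem:HIu-size}, confined to $k=3$; the paper itself treats that situation only lightly, so it is not a defect specific to your plan.
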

\begin{proof}
	It is not hard to verify the if direction.
	(At first glance the case $\mathcal{H}(G, I, u)\allowbreak \neq \emptyset$ and $\mathcal{A}(G, I, u) = \emptyset$ may be a bit tricky. 
	In this case, we have $I \subseteq L_G[u]$, i.e., all tokens are either on $u$ or one of its leaf-neighbors, and because $\mathcal{H}(G, I, u) \neq \emptyset$ the token on $u$ is immediately $(G, I)$-rigid.)
	We prove the only-if direction by contraposition, i.e., we claim that if none of (a) and (b) hold, there is a $\mathsf{TS}$-sequence $\mathcal{S}$ in $G$ that slides $t$ from $u$ to some vertex in $N_G(u)$.
	More precisely, our new assumptions are as follows.
	\begin{enumerate}[(1)]
		\item Either (1.1) $N_G(u) \nsubseteq I$ or (1.2) $N_G(u) \subseteq I$ and there is $v \in N_G(u)$ such that the token $t_v$ on $v$ is $(G - u, I \cap V(G-u))$-movable; and
		\item Either (2.1) $\mathcal{H}(G, I, u) = \emptyset$ or (2.2) $\mathcal{H}(G, I, u) \neq \emptyset$ and none of (b.1)--(b.4) hold. 
	\end{enumerate}
	\begin{itemize}
		\item \textbf{Case~1: (1.1) and (2.1) hold.} 
		If $\mathcal{P}(G, I, u) = \emptyset$, it follows that for any $k$-path $P$ having either $u$ or one of its leaf-neighbors as an endpoint, there exists a vertex $v \in (V(P) \setminus \{u\}) \cap I$.
		If exactly one of $l(u)$ and $r(u)$ is not in $I$, we can immediately slide $t$ to it.
		Otherwise, since $N_G(u) \nsubseteq I$, there must be a vertex $u^\prime \in L_G(u) \setminus I$, and we can slide $t$ to $u^\prime$ immediately.
		
		Next, suppose that $\mathcal{P}(G, I, u) \neq \emptyset$.
		Let $\mathcal{H}$ be the set of seven candidates for being members of $\mathcal{H}(G, I, u)$ considered in the proof of Lemma~\ref{lem:find-HIu}.
		Note that $G[L_G[u]]$ always satisfies (H.1).
		Since $\mathcal{H}(G, I, u) = \emptyset$, for each member of $\mathcal{H}$, either (H.1) or (H.2) does not hold.
		First, we consider the following sub-case: none of the graphs in $\mathcal{H}$ satisfies (H.2).
		Then, it follows that $\mathcal{P}(G, I, u)$ must be either $\mathcal{P}_l(G, I, u)$, $\mathcal{P}_r(G, I, u)$, or $\mathcal{P}_c(G, I, u)$, otherwise, at least two of these sets must be non-empty and therefore (H.2) is satisfied.
		If $\mathcal{P}(G, I, u) = \mathcal{P}_c(G, I, u)$, we must have $k = 3$ and $\mathcal{P}(G, I, u)$ contains a single $3$-path $P$ whose endpoints are in $L_G(u) \setminus I$, otherwise $G[L_G[u]]$ satisfies (H.2), and therefore $G[L_G[u]] \in \mathcal{H}(G, I, u)$, which is a contradiction.
		Since both $G[L_G[u] \cup L_G[l(u)]]$ and $G[L_G[u] \cup L_G[r(u)]]$ do not satisfy (H.2), it follows that both $l(u)$ and $r(u)$ are in $I$, and therefore we can slide $t$ to one of $P$'s endpoints.
		If $\mathcal{P}(G, I, u) = \mathcal{P}_l(G, I, u)$, every $k$-path $Q$ not in $\mathcal{P}_l(G, I, u)$ satisfies $(V(Q) \setminus \{u\}) \cap I \neq \emptyset$.
		Additionally, from the definition of $\mathcal{P}(G, I, u)$, we must have $l(u) \notin I$.
		Then, we can move $t$ from $u$ to $l(u)$ immediately.
		One can argue similarly for the case $\mathcal{P}(G, I, u) = \mathcal{P}_r(G, I, u)$.
		It remains to consider the second sub-case: some member of $\mathcal{H}$ satisfies (H.2). 
		For each $H \in \mathcal{H}$ satisfying (H.2), it cannot satisfy (H.1).
		Since $H$ does not satisfy (H.1), there exists a vertex $v_H \in V(H) \cap S$ such that $v_H \neq u$ and $L_G[v_H] \cap I \neq \emptyset$.
		(Recall that for every $H \in \mathcal{H}$, we always have $L_G[u] \subseteq V(H)$.)
		For each such $v_H$, if it is not in $I$, we immediately slide a token from some vertex in $L_G(v_H) \cap I$ to $v_H$, otherwise we do nothing.
		Since any $k$-path covered by a leaf is also covered by its unique neighbor in $G$, these token-slides always result new $k$-PVCs.
		In this way, via a $\mathsf{TS}$-sequence, we finally obtain a new $k$-PVC $J$.
		From the construction of $J$, for every $H \in \mathcal{H}$, any $v_H \in (V(H) \cap S) \setminus \{u\}$ satisfying $L_G[v_H] \cap J \neq \emptyset$ must be in $J$.
		Observe that with the new $k$-PVC $J$, no $H \in \mathcal{H}$ satisfies (H.2).
		Note that for each $H \in \mathcal{H}$ not satisfying (H.2) with $I$, it also does not satisfy (H.2) with $J$.
		Now, suppose to the contrary that there exist two $k$-paths $P_H$ and $Q_H$ of a graph $H \in \mathcal{H}$ satisfying (H.2) with $J$.
		Since $H$ satisfies (H.2), it does not satisfy (H.1), which implies that there exists a vertex $w \in V(H) \cap S$ such that $w \neq u$ and $L_G[w] \cap J \neq \emptyset$.
		From the construction of $J$, we have $w \in J$.
		From the definition of $H$, we have $w \in (V(P_H) \cup V(Q_H)) \setminus \{u\}$, which contradicts the assumption that $P_H, Q_H$ satisfy (H.2).
		As a result, we are back to the first sub-case.  
		
		\item \textbf{Case~2: (1.2) and (2.1) hold.}
		Since $t_v$ is $(G-u, I \cap V(G-u))$-movable, there is a $\mathsf{TS}$-sequence $\mathcal{S}^\prime = \langle I_0, \dots, I_p \rangle$ in $G - u$ that slides $t_v$ to one of its neighbors $w \in N_G(v) \setminus \{u\}$.
		Moreover, for any $k$-PVC $I^\prime$ of $G - u$, the set $I^\prime \cup \{u\}$ forms a $k$-PVC of $G$.
		It follows that $\mathcal{S} = \langle I_0 \cup \{u\}, \dots, I_p \cup \{u\} \rangle$ is a $\mathsf{TS}$-sequence in $G$ that slides $t_v$ to $w$.
		Moreover, after moving $t_v$, the only neighbor of $u$ having no token is $v$, which means once $t_v$ is placed on $w$, we can immediately slide $t$ from $u$ to $v$.
		
		\item \textbf{Case~3: (1.1) and (2.2) hold.}
		Since (b.2)--(b.4) do not hold, it follows that for each $H \in \mathcal{H}(G, I, u)$, there exists a $(G - u, I \cap V(G - u))$-movable token $t_v$ on some vertex $v \in \mathcal{A}(G, I, u)$ such that some $\mathsf{TS}$-sequence $\mathcal{S}^\prime$ in $G - u$ slides either $t_v$ or one of the tokens placed on some vertices in $L_G(v)$ (if exist) to some vertex $w \in V(H)$.
		For any $k$-PVC $I^\prime$ of $G - u$, $I^\prime \cup \{u\}$ forms a $k$-PVC of $G$.
		Thus, by adding $u$ to each member of $\mathcal{S}^\prime$, we obtain a $\mathsf{TS}$-sequence in $G$ that slides either $t_v$ or one of the tokens placed on some vertices in $L_G(v)$ (if exist) to $w$.
		By definition, $\mathcal{A}(G, I, u) \cap L_G(u) \neq \emptyset$.
		Let $P_H$ and $Q_H$ be two $k$-paths in $H$ satisfying (H.2).
		Since (b.1) does not hold, it follows that at least one of $P_H$ and $Q_H$, say $P_H$, must be in either $\mathcal{P}_l(G, I, u)$ or $\mathcal{P}_r(G, I, u)$.
		As a result, we can assume w.l.o.g that $w \in V(P_H) \subseteq V(H)$.
		(Otherwise, $\vert V(H) \cap S \vert$ is not minimum, which contradicts $H \in \mathcal{H}(G, I, u)$.)
		Then, once a token is placed on $w$, we can immediately move $t$ from $u$ to one of its neighbors in $V(Q_H)$.
		
		\item \textbf{Case~4: (1.2) and (2.2) hold.} Observe that the condition $N_G(u) \subseteq I$ implies $\mathcal{H}(G, I, u) = \emptyset$: for any subgraph $H$ of $G$ containing $u$, (H.1) does not hold. 
		Therefore, this case cannot happen.
	\end{itemize}
\qed\end{proof}

Observe that if $k \geq 4$, $G[L_G[u]] \notin \mathcal{H}(G, I, u)$.
Additionally, Lemma~\ref{lem:HIu-size} implies that when $k \geq 4$, $\mathcal{H}(G, I, u)$ has at most one member.
The following lemma says that Lemma~\ref{lem:rigid-tokens-in-S}(b.4) can be verified efficiently when $k \geq 4$.
\begin{lemma}\label{lem:check-b4}
	Let $I$ be a $k$-PVC ($k \geq 4$) of a $n$-vertex caterpillar $G = (S \cup L, E)$, and let $u \in I \cap S$.
	Suppose that $\mathcal{H}(G, I, u) \neq \emptyset$, and there exists a $(G - u, I \cap V(G - u))$-movable token $t_v$ on some $v \in \mathcal{A}(G, I, u)$.
	Let $H$ be the unique member in $\mathcal{H}(G, I, u)$.
	Then, for each $v \in \mathcal{A}(G, I, u)$ such that the token $t_v$ on $v$ is $(G - u, I \cap V(G - u))$-movable, one can decide in $O(n)$ time if there is a $\mathsf{TS}$-sequence in $G - u$ that moves either $t_v$ or one of the tokens on some vertex in $L_G(v)$ (if exist) to some vertex in $V(H)$.
\end{lemma}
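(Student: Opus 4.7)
The plan is to reduce the reachability question to a local simulation in a bounded-diameter sub-caterpillar and resolve it by direct case analysis, all in $O(n)$ time. First I would exploit $k \geq 4$: by Lemma~\ref{lem:HIu-size}, $\mathcal{H}(G, I, u) = \{H\}$ is a singleton, and the construction in the proof of Lemma~\ref{lem:find-HIu} shows that $V(H) \cap S$ lies within spine-distance $k-2$ of $u$. I compute $H$ in $O(n)$ via Lemma~\ref{lem:find-HIu}, then identify the connected component $G_v$ of $G - u$ containing $v$; it is itself a caterpillar. If $V(H) \cap V(G_v) = \emptyset$---which happens exactly when $H$ does not extend toward $G_v$'s side---return NO. Otherwise, let $s^\ast \in \{l(u), r(u)\}$ be the spine-neighbor of $u$ lying in $G_v$; then $s^\ast \in V(H) \cap V(G_v)$ and, by (H.1), $L_G[s^\ast] \cap I = \emptyset$, so the entire target region is token-free.

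Because $v \in \mathcal{A}(G, I, u)$, the spine segment from $s^\ast$ to $v$ (or to $v$'s unique spine-neighbor $v^\ast$ if $v \in L$) is also token-free in its interior. Hence the question reduces to whether a token from $\{v\} \cup (L_G(v) \cap I)$ can be slid along this short (length $\leq k$) token-free segment into $V(H) \cap V(G_v)$ while preserving the $k$-PVC property in $G - u$. I would decide this by explicit simulation: identify the unique candidate slide sequence (possibly preceded by a single auxiliary leaf-to-spine slide if $v$ is a leaf, or by moving a token from $L_G(v) \cap I$ onto $v$), then verify that each intermediate set is a $k$-PVC of $G - u$. Each verification inspects the $O(1)$ many $k$-paths passing through the vacated vertex; after an $O(n)$ sweep precomputing all token positions within spine-distance $k$ of $u$, each such check takes $O(1)$, and only $O(1)$ slides arise.

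The main obstacle is that sliding $t_v$ off of $v$ can leave uncovered those $k$-paths extending from $v$ toward the far side of $G_v$ that were covered solely by $t_v$ in $I$. The plan for handling this is to show that any such dangerous $k$-path either (i) has an endpoint in $L_G(v)$, in which case a token from $L_G(v) \cap I$ (if any) can be pre-slid to a covering position in a single auxiliary step, or (ii) is already covered by another token of $I$ lying at or beyond spine-distance $1$ from $v$ in the direction away from $u$, which remains untouched by our slides. Enumerating these possibilities yields a constant-size case split based on $|L_G(v) \cap I|$, the spine-distance from $v$ to $s^\ast$, and whether $v$ is a spine vertex or a leaf; in each case the answer can be computed in $O(n)$ time, completing the proof.
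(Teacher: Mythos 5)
There is a genuine gap: your proof treats the decision as a \emph{local} simulation involving only $O(1)$ slides near $v$, but the reachability question is inherently \emph{global} within the component $G_v$ of $G-u$ containing $v$. When $t_v$ moves toward $H$, the $k$-paths on the far side of $v$ that were covered only by $t_v$ must be re-covered, and the token that re-covers them may itself need to be replaced by a token still further away, and so on --- a cascade that can span the entire component and involve $\Theta(n)$ slides. Your dichotomy for the ``dangerous'' $k$-paths (either an endpoint in $L_G(v)$, or already covered by an untouched token beyond $v$) is exactly where this breaks: a $k$-path starting at the spine-neighbor of $v$ away from $u$ and extending outward may be covered by no token of $I$ other than $t_v$, yet the instance can still be a yes-instance because a surplus token several ``blocks'' away can be cascaded inward first. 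Conversely, your NO-condition ($V(H)\cap V(G_v)=\emptyset$) is not the only obstruction.

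The paper's proof resolves this by running $\mathtt{Partition}(H_v,k,r)$ on $H_v = G_v - (V(H)\cap V(G_v))$, obtaining subtrees $T_1(r),\dots,T_p(r)$ with designated vertices $v_1,\dots,v_p$, and proving that the answer is ``yes'' if and only if some $T_i(r)$ contains at least two tokens of $I$ (equivalently, $I\cap V(H_v)$ is not a minimum $k$-PVC of $H_v$). The ``yes'' direction is a non-trivial cascading argument (Claims~\ref{clm:i-equal-1} and~\ref{clm:i-gt-1}) that shifts the surplus token block by block toward $H$; the ``no'' direction uses the minimality of $I\cap V(H_v)$ together with Claim~\ref{clm:v-neq-v1}. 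None of this global machinery appears in your sketch, and without it the $O(n)$ decision procedure you describe answers incorrectly on instances where the surplus token lies outside spine-distance $k$ of $u$.
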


\begin{proof}
	By definition, $(P_{uv} \setminus \{u, v\}) \cap I = \emptyset$.
	Thus, if $v \in L$, $t_v$ can be immediately slid to $v$'s unique neighbor in $S$ (which is also in $V(P_{uv}) \setminus \{u, v\}$).
	Then, we can assume w.l.o.g that for every $v \in \mathcal{A}(G, I, u)$ such that the token $t_v$ on $v$ is $(G - u, I \cap V(G - u))$-movable, $v$ is also in $S$.
	
	Note that $k-2 \leq \text{dist}_G(u, v) \leq k$.
	Moreover, if $\text{dist}_G(u, v) = k$, the token $t_v$ can only be slid to its unique neighbor in $P_{uv}$; otherwise, since no token can ``jump'' to some vertex in $V(P_{uv}) \setminus \{u, v\}$ or one of its leaf-neighbors before $t_v$ moves, some uncovered $k$-path exists.
	Once $t_v$ is moved, we are indeed considering the case $\text{dist}_G(u, v) = k-1$. 
	As a result, we can assume w.l.o.g that $k-2 \leq \text{dist}_G(u, v) \leq k-1$.
	
	Suppose that $G[S] = s_1\dots s_\ell$.
	If $v$ and $v^\prime$ are not in the same component of $G - u$ for all $v^\prime \in V(H) \setminus L_G[u]$, clearly the answer is ``no''.
	Otherwise, let $G_v$ be the component of $G - u$ containing $v$ and some $v^\prime \in V(H) \setminus L_G[u]$.
	Since $G_v$ contains at least two distinct vertices $v, v^\prime$, it also contains exactly one vertex $r \in \{s_1, s_\ell\}$.
	Let $H_v = G_v - (V(H) \cap V(G_v))$.
	Note that $H_v$ contains both $v$ and $r$.
	Let $P(H_v, k, r) = \{T_1(r), \dots, T_p(r)\}$ and $I(H_v, k, r) = \{v_1, \dots, v_p\}$ be respectively the partition and the minimum $k$-PVC of $H_v$ obtained in $O(|V(H_v)|)$ time by running $\mathtt{Partition}(H_v, k, r)$ (Algorithm~\ref{algo:partition}), where $p = \psi_k(H_v)$---the size of a minimum $k$-PVC of $H_v$.
	In particular, $v_i \in V(T_i(r)) \cap S$ ($i \in \{1, \dots, p\}$), $v \in V(T_1(r))$, and $r \in V(T_p(r)) \cap S$.
	
	Additionally, we now characterize the position of $t_v$ in $T_1(r)$.
	
	\begin{claim}\label{clm:v-neq-v1}
		We always have $v \neq v_1$.
		Additionally, if there is a $\mathsf{TS}$-sequence $\mathcal{S}$ in $G_v$ that slides $t_v$ to some $v^\prime \in L_G(v)$, $\mathcal{S}$ must slide a token from some vertex in $H_v - T_1(r)$ to $v_1$ before sliding $t_v$ from $v$ to $v^\prime$. 
	\end{claim}
	\begin{proof}
		We first show that $v \neq v_1$.
		Recall that we assumed $k-2 \leq \text{dist}_G(u, v) \leq k-1$.
		Suppose to the contrary that $v = v_1$.
		Then, from the definition of $H$ and $v$, for every $x \in V(P_{uv_1}) \setminus \{u, v_1\}$, we have $L_G[x] \cap I = \emptyset$.
		(Recall that we assumed w.l.o.g that $v \in S$, that is, if $v \in L$, we moved $t_v$ to $v$'s unique neighbor in $S$ and regarded that vertex as $v$.)
		Let $a \in V(H)$ and $b \in V(H_v)$ be such that $ab \in E(G_v)$.
		Then, $a, b \in V(P_{uv_1})$.
		Note that by definition of $T_1(r)$, Algorithm~\ref{algo:partition} starts from either $b$ (if it is a leaf in $H_v$) or one of its leaf-neighbors and ``moves toward $r$'' along the spine $S$ until finding $v_1$ such that $T_1(r)$ contains a $k$-path while $T_1(r) - v_1$ does not, which implies that $\text{dist}_G(b, v_1) \geq k - 3$.
		Moreover, $\text{dist}_G(u, v_1) = \text{dist}_G(u, a) + \text{dist}_G(a, b) + \text{dist}_G(b, v_1) \geq (k-3) + 1 + (k-3) = 2k - 5$.
		Therefore, $2k-5 \leq \text{dist}_G(u, v_1) = \text{dist}_G(u, v) \leq k-1$, which means $k \leq 4$.
		Additionally, $k \geq 4$.
		Finally, we obtain $k = 4$.
		It follows that $\text{dist}_G(u, v_1) = 3$ and therefore $P_{uv_1} = uabv_1$. 
		By definition of $H$, every leaf of $a$ must not be in $I$.
		By definition of $T_1(r)$ and the assumption $v = v_1$, every leaf of $b$ must not be in $I$, otherwise a leaf of $b$ in $I$ also satisfies the definition of $v$, which contradicts our assumption that any such vertex must be in $S$.
		However, $a$, $b$ and their ``empty'' leaf-neighbors form at least one uncovered $P_4$, which contradicts the fact that $I$ is a $4$-PVC.
		Therefore, $v \neq v_1$.
		
		Let $\mathcal{S}$ be a $\mathsf{TS}$-sequence in $G_v$ that slides $t_v$ to some $v^\prime \in L_G(v)$.
		Suppose to the contrary that $S$ does not slide any token to $v_1$.
		We show that immediately sliding $t_v$ from $v$ to $v^\prime$ results some uncovered $k$-path.
		As before, we have $\text{dist}_G(u, v_1) \geq 2k - 5$.
		Obviously, if $k \geq 5$, we have $\text{dist}_G(u, v_1) \geq 2k - 5 \geq k$, and sliding $t_v$ from $v$ to $v^\prime$ results an uncovered $k$-path $P_{v_1u^\prime}$, where $u^\prime$ is the unique neighbor of $u$ in $P_{uv_1}$.
		Now, if $k = 4$, one can readily verify that in both cases $\text{dist}_G(u, v_1) = 3$ and $\text{dist}_G(u, v_1) \geq 4$, sliding $t_v$ from $v$ to $v^\prime$ results in some uncovered $4$-path.
	\qed\end{proof} 
	
	Observe that if $\vert V(T_i(r)) \cap I \vert = 1$ for every $i \in \{1, \dots, p\}$, the set $I \cap V(H_v)$ is indeed a minimum $k$-PVC of $H_v$, and therefore we answer ``no'' in this case.
	Thus, it remains to consider the case when there exists some index $i \in \{1, \dots, p\}$ such that $\vert V(T_i(r)) \cap I \vert \geq 2$.
	For convenience, we also use $i$ to denote the smallest member among such indices. 
	In this case, we claim that the answer is always ``yes'', that is, there exists a $\mathsf{TS}$-sequence in $G_v$ that slides either $t_v$ or one of the tokens on some vertex in $L_G(v)$ (if exist) to some vertex in $V(H)$.
	(See Claims~\ref{clm:i-equal-1} and~\ref{clm:i-gt-1}.)
	
	Claim~\ref{clm:v-neq-v1} implies that one can always slide $t_v$ from $v$ to some vertex in $V(H)$ if there exists a $\mathsf{TS}$-sequence in $G_v$ that slides $t_v$ to some $v^\prime \in L_G(v)$.
	(Slide a token to $v_1$ first, and then slide $t_v$.)
	As a result, we can now assume w.l.o.g that for every $(G-u, I \cap V(G-u))$-movable token $t_v$, any $\mathsf{TS}$-sequence in $G_v$ that slides $t_v$, if exists, must move it to either $l(v)$ or $r(v)$.
	
	To simplify our proof, we further assume that $r = s_1$, that is, we only care about what happens in the ``left-hand side'' of $H$. 
	The case $r = s_\ell$ can be argued similarly.
	Let $w$ be the (unique) vertex in $H$ where $l(w) \in V(T_1(s_1))$.
	We first consider the case $i = 1$.
	
	\begin{claim}\label{clm:i-equal-1}
		When $i = 1$, one can slide either $t_v$ or one of the tokens placed on some vertices in $L_G(v)$ (if exist) to $w$.
	\end{claim}
	\begin{proof}
		If $v_1 \in I$, we can immediately slide $t_v$, because $v_1$ covers all $k$-paths in $T_1(s_1)$ and $v \neq v_1$.
		Thus, suppose that $v_1 \notin I$.
		As before, from the definition of $H$ and $v$, for every $x \in V(P_{uv}) \setminus \{u, v\}$, we have $L_G[x] \cap I = \emptyset$.
		Now, if there exists $a \in V(T_1(s_1)) \cap I$ such that $(V(P_{av_1}) \setminus \{a, v_1\}) \cap I = \emptyset$ and $a \neq v$, since $T_1(s_1) - v_1$ does not contain any $k$-path, we can directly slide the token on $a$ to $v_1$.
		Once a token is placed on $v_1$, as before, we can slide $t_v$ to $w$.
		If such vertex $a$ does not exist, we must have $V(T_1(s_1)) \cap I = L_G[v] \cap I$.
		As $t_v$ is $(G - u, I \cap V(G - u))$-movable, there is a $\mathsf{TS}$-sequence $\mathcal{S}$ in $G_v$ (the component of $G - u$ containing $v$) that slides $t_v$ to some vertex $v^\prime \in N_G(v)$.
		From our assumption, $v^\prime \in \{l(v), r(v)\}$.
		If $v^\prime = r(v)$, we can then slide a token on some vertex in $L_G(v)$ (such a token exists because $i = 1$) to $v_1$, and then slide $t_v$, which is now on $r(v)$, to $w$.
		If $v^\prime = l(v)$, we can then slide $t_v$, which is now on $l(v)$, to $v_1$, and then slide a token on some vertex in $L_G(v)$ to $w$.
		Our proof is complete.
	\qed\end{proof}
	
	It remains to consider the case $i > 1$.
	More precisely, we show that
	\begin{claim}\label{clm:i-gt-1}
		When $1 < i \leq p$, one can slide a token in $V(T_i(s_1)) \cap I$ to $v_{i-1} \in V(T_{i-1}(s_1))$.
	\end{claim}
	\begin{proof}
		Recall that we assumed that any $\mathsf{TS}$-sequence in $G_v$ that slides $t_v$, if exists, must move it to either $l(v)$ or $r(v)$.
		Indeed, if there is a $\mathsf{TS}$-sequence in $G_v$ that slides $t_v$ to $r(v)$, we can perform such sequence to obtain a new $k$-PVC where we are certain that there exists a $\mathsf{TS}$-sequence in $G_v$ that slides $t_v$ to $l(v)$.
		Moreover, if there is a $\mathsf{TS}$-sequence in $G_v$ that slides $t_v$ to $l(v)$, we can perform such a sequence except the final step of sliding $t_v$ to $l(v)$ to obtain a new $k$-PVC where we are certain that one can immediately slide $t_v$ to $l(v)$.
		In other words, we can further assume w.l.o.g that one can always slide $t_v$ to $l(v)$ immediately.
		
		Recall that $i$ is the smallest index such that $T_i(s_1)$ contains at least two tokens.
		As a result, for each index $j$ such that $1 \leq j < i$, there is exactly one token placed on some vertex in $T_j(s_1)$, say $v^j$.
		In particular, $v^1 = v$.
		Additionally, let $v^i \in V(T_i(s_1)) \cap I$ be such that $(V(P_{v^i v_{i-1}}) \setminus \{v^i, v_{i-1}\}) \cap I = \emptyset$.
		Note that if $v^i \in L$, there must be no token placed on its unique neighbor in $S$; otherwise, it contradicts the definition of $v^i$.
		We can thus assume w.l.o.g that $v^j \in S$ for $1 \leq j \leq i$; otherwise, we can immediately slide the token on $v^j$ to its unique neighbor in $S$ and continue working with the resulting $k$-PVC.
		
		\begin{itemize}
			\item \textbf{When $1 < i < p$.}
			We first consider the case $v_i \in I$.
			If $v^i \neq v_i$ then we can slide the token on $v^i$ to the unique neighbor $x$ of $v_{i-1}$ in $T_i(s_1)$.
			This can be done because $v_i$ covers all $k$-paths in $T_i(s_1)$ and $(V(P_{v^i v_{i-1}}) \setminus \{v^i, v_{i-1}\}) \cap I = \emptyset$.
			At this point, if $v_{i-1} \notin I$, we can immediately slide the token on $x$ to $v_{i-1}$.
			If $v_{i-1} \in I$, we simply slide the token on $v_{i-1}$ to its right-neighbor $r(v_{i-1})$ (recall that $\vert V(T_{i-1}(s_1)) \cap I \vert = 1$) and then slide the token on $x$ to $v_{i-1}$.
			These $\mathsf{TS}$-moves do not result in any uncovered $k$-path because $k \geq 4$, $v_i \in I$, and a token has already been placed on $x$.
			On the other hand, if $v^i = v_i$, we must have $V(T_i(s_1)) \cap I = L_G[v_i] \cap I$, otherwise, it contradicts the definition of $v^i$.
			We now show that one can slide the token on $v_i$ to either $l(v_i)$ or $r(v_i)$.
			Once the token on $v_i$ is moved, we can then move one or more tokens on leaf-neighbors of $v_i$ to some vertex in $T_i(s_1)$ other than $v_i$ or its leaf-neighbors, and we are back to the case $v^i \neq v_i$.
			Observe that $r(v_i) \notin I$.
			If $l(v_i) \in I$, we can immediately move the token on $v_i$ to $r(v_i)$.
			Thus, it remains to consider the case $l(v_i) \notin I$.
			From the definition of $v^j$ ($1 < j \leq i$) and the observation that $V(T_i(s_1)) \cap I = L_G[v_i] \cap I$, for every $x \in V(P_{v^j v^{j-1}}) \setminus \{v^j, v^{j-1}\}$, we have $L_G[x] \cap I = \emptyset$.
			Additionally, we have $\text{dist}_G(l(v), v^2) = \text{dist}_G(v, v^2) - 1 = \text{dist}_G(v^1, v^2) - 1 \leq k - 1$.
			Recall that we assumed $t_v$ can be slid to $l(v)$ immediately.
			Then, it follows that the token on $v^2$ can indeed be slid to $l(v^2)$ immediately.
			By repeatedly applying this argument, we finally obtain a $\mathsf{TS}$-sequence that moves the token on $v^j$ to $l(v^j)$, for $1 \leq j \leq i$, that is, such a sequence finally moves the token on $v_i = v^i$ to $l(v_i)$, as we required.
			
			It remains to consider the case $v_i \notin I$.
			Let $x \in V(T_i(s_1)) \cap I$ be such that $(V(P_{xv_i}) \setminus \{x, v_i\}) \cap I = \emptyset$.
			If there is such a vertex $x$ with $x \neq v^i$, since $T_i(s_1) - v_i$ does not contain any $k$-path, we can simply slide the token on $x$ to $v_i$ along $P_{xv_i}$, and we are then back to the case $v_i \in I$.
			Otherwise, it follows that all tokens in $V(T_i(s_1)) \cap I$ must be placed either on $x = v^i$ or one of its leaf-neighbors.
			Observe that $l(x) \notin I$.
			Since $t_v$ can only be slid to $l(v)$, as before, we can indeed show that there is a $\mathsf{TS}$-sequence that moves the token on $v^j$ to $l(v^j)$, for $1 \leq j \leq i$, that is, such a sequence finally moves the token on $x = v^i$ to $l(x)$.
			At this point, we can slide a token from some leaf-neighbor of $x$ to $x$ itself, slide the token on $l(x)$ to $v_i$ along $P_{l(x)v_i}$, and then we are back to the case $v_i \in I$.
			
			\item \textbf{When $i = p$.} 
			If $v_p \in I$ and $v^p \neq v_p$, as before, one can construct a $\mathsf{TS}$-sequence that slides the token on $v^p$ to $l(v_{p-1})$, then slides the token on $v_{p-1}$ away, if it exists, and finally slide the token on $l(v_{p-1})$ to $v_{p-1}$.
			If $v_p \in I$ and $v^p = v_p$, let $x \in V(T_p(s_1)) \cap I$ be such that $(V(P_{xv_p}) \setminus \{x, v_p\}) \cap I = \emptyset$.
			Note that if $v_p = s_j$ then $x$ must be placed on some vertex in $\bigcup_{j^\prime=1}^j L_G[s_{j^\prime}] \setminus \{s_j\}$.
			If $x \in L_G(v_p)$, we can simply slide the token on $v_p$ to $l(v_p)$ as in the previous case (which can be done by immediately sliding the token on $v$ to $l(v)$ and so on) and then immediately slide the token on $x$ to $v_p$ and then to $r(v_p)$, and finally immediately slide the token on $l(v_p)$ back to $v_p$. 
			(This can be done because $k \geq 4$.)
			The current token on $r(v_p)$ can now be slid to $v_{p-1}$ as before.
			Now, if $x \notin L_G(v_p)$, since $k \geq 4$ and $T_p(s_1) - v_p$ contains no $k$-path, one can slide the token $t_x$ on $x$ to $l(v_p)$, then slide the token on $v_p$ to $r(v_p)$, and finally slide $t_x$, which is currently placed on $l(v_p)$, to $v_p$.
			Again, the token on $r(v_p)$ can now be slid to $v_{p-1}$.
		\end{itemize}
	\qed\end{proof}
	
	By repeatedly applying Claim~\ref{clm:i-gt-1} while $i > 1$ and finally applying Claim~\ref{clm:i-equal-1} when $i = 1$, one can indeed construct a $\mathsf{TS}$-sequence in $G_v$ that slides either $t_v$ or one of the tokens on some vertex in $L_G(v)$ (if exist) to some vertex in $V(H)$.
	We remind that only the existence of such a $\mathsf{TS}$-sequence is required in deciding whether $t_v$ or a token in $L_G(v)$ (if exists) can be moved to $V(H)$.
	As a result, one can indeed verify that deciding whether $t_v$ or a token in $L_G(v)$ (if exists) can be moved to $V(H)$ can indeed be done in $O(n)$ time.
	Our proof is complete.
\qed\end{proof}

\begin{remark}\label{rem:k3-is-harder}
	We remark that deciding whether Lemma~\ref{lem:rigid-tokens-in-S}(b.4) can be verified in polynomial time when $k = 3$ may require further insightful ideas. 
	Unlike the case $k \geq 4$, even when the set $I \cap V(H_v)$ is not minimum, it is possible that no tokens in $V(H_v)$ can be moved to some vertex in $V(H)$. 
	\figurename~\ref{fig:k3-is-harder} illustrates an example of this situation.
	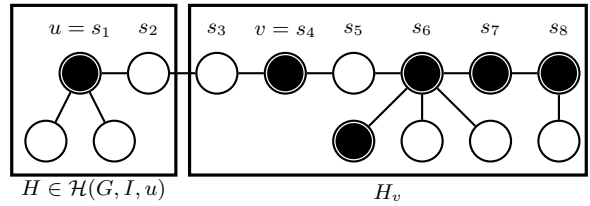
\begin{figure}[!ht]
		\centering
		\begin{adjustbox}{max width=\textwidth}
		\begin{tikzpicture}[every node/.style={draw, thick, circle, minimum size=0.6cm, transform shape}, scale=0.9]
			\begin{scope}
				\foreach \i in {1,...,8} {
					\ifthenelse{\i=1}{
						\node [label={[label distance=-0.3cm]above:$u = s_{\i}$}] (s\i) at (\i, 0) {};
					}{
						\ifthenelse{\i=4}{
							\node [label={[label distance=-0.3cm]above:$v = s_{\i}$}] (s\i) at (\i, 0) {};
						}{
							\node [label=above:{$s_{\i}$}] (s\i) at (\i, 0) {};
						}
					}
				}
				\draw [thick] (s1) -- (s2) -- (s3) -- (s4) -- (s5) -- (s6) -- (s7) -- (s8);
				
				\node (l11) at (0.5,-1) {};
				\node (l12) at (1.5,-1) {};
				\draw [thick] (s1) -- (l11) (s1) -- (l12);
				
				\node (l61) at (5,-1) {};
				\node (l62) at (6,-1) {};
				\node (l63) at (7,-1) {};
				\draw [thick] (s6) -- (l61) (s6) -- (l62) (s6) -- (l63);
				
				\node (l81) at (8,-1) {};
				\draw [thick] (s8) -- (l81);
				
				\draw [very thick] (0,1) -- (0,-1.5) -- node [below, draw=none, yshift={1cm}] {$H \in \mathcal{H}(G, I, u)$} (2.4, -1.5) -- (2.4, 1) -- cycle;
				\draw [very thick] (2.6,1) -- (2.6,-1.5) -- node [below, draw=none, yshift={0.15cm}] {$H_v$} (8.4, -1.5) -- (8.4, 1) -- cycle;
				
				\node [fill=black, minimum size=0.5cm] at (s1.center) {};
				\node [fill=black, minimum size=0.5cm] at (s4.center) {};
				\node [fill=black, minimum size=0.5cm] at (s6.center) {};
				\node [fill=black, minimum size=0.5cm] at (l61.center) {};
				\node [fill=black, minimum size=0.5cm] at (s7.center) {};
				\node [fill=black, minimum size=0.5cm] at (s8.center) {};
			\end{scope}
		\end{tikzpicture}
		\end{adjustbox}
		\caption{Illustration for Remark~\ref{rem:k3-is-harder}. Here $I$ is a $3$-PVC whose tokens are marked with black colors, $u = s_1$, and $v = s_4$. The concepts and notations follow Lemma~\ref{lem:check-b4}. The token on $v$ can be slid immediately to $s_3$.}
		\label{fig:k3-is-harder}
	\end{figure}
	In \figurename~\ref{fig:k3-is-harder}, intuitively, it seems that the reason no token can be slid to some vertex in $V(H)$ is because the token on $s_6$ somehow ``blocks'' the movement of all other tokens on its ``right-hand side''.
	Characterizing such a token on $s_6$ in general may require verifying whether some tokens can be slid in $H_v - L_G[s_6]$ to both $s_6$'s left-neighbor $s_5$ and right-neighbor $s_7$ in polynomial time.
	More interestingly, $s_6$ may not be the only vertex whose token having such ``blocking property''.
\end{remark}

To conclude this section, we show that
\begin{lemma}\label{lem:find-rigid-tokens}
	Let $I$ be a $k$-PVC ($k \geq 4$) of a caterpillar $G = (S \cup L, E)$.
	One can find the set $\mathcal{R}(G, I)$ of all $(G, I)$-rigid tokens in $O(n^3)$ time, where $n = \vert V(G) \vert$.
\end{lemma}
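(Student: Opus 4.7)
The plan is to iterate over each of the $O(n)$ tokens $u \in I$ and decide whether it is $(G, I)$-rigid by applying Lemmas~\ref{lem:rigid-tokens-in-L} and~\ref{lem:rigid-tokens-in-S}, budgeting $O(n^2)$ work per token so that the overall cost is $O(n^3)$.

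For a leaf token $u \in I \cap L$ with $N_G(u) = \emptyset$, rigidity is immediate from Lemma~\ref{lem:rigid-tokens-in-L}(a). Otherwise, Lemma~\ref{lem:rigid-tokens-in-L}(b) reduces the question to rigidity of the unique neighbor's token in the smaller caterpillar $G - u$, which I would handle by the spine-case analysis applied to $G - u$.

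For a spine token $u \in I \cap S$, I would first compute $\mathcal{H}(G, I, u)$ in $O(n)$ time via Lemma~\ref{lem:find-HIu}, then build $\mathcal{A}(G, I, u)$ in $O(n)$ time by a depth-bounded traversal around $u$ that enumerates vertices of $I$ connected to $u$ by token-free internal paths of length at most $k$. Conditions (a), (b.1), (b.2), and (b.3) from Lemma~\ref{lem:rigid-tokens-in-S} are then verifiable using these two structures together with a constant number of auxiliary rigidity queries on $G - u$, each of which reduces to the same characterization on a smaller caterpillar. Condition (b.4) is the dominant cost: by Lemma~\ref{lem:HIu-size}, when $k \geq 4$ there is at most one $H \in \mathcal{H}(G, I, u)$, and Lemma~\ref{lem:check-b4} decides in $O(n)$ time, for each $v \in \mathcal{A}(G, I, u)$, whether $t_v$ (or a token on one of $v$'s leaf-neighbors) admits a $\mathsf{TS}$-sequence in $G - u$ that reaches $V(H)$; since $|\mathcal{A}(G, I, u)| = O(n)$, this step contributes $O(n^2)$ per token.

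The main obstacle is controlling the recursion triggered by the appearance of "$(G - u)$-rigidity" in conditions (a) and (b.3) and of "$(G - u)$-movability" in condition (b.4). The plan is to exploit the caterpillar structure: $G - u$ is a disjoint union of at most two smaller caterpillars (plus possibly isolated leaves), and every sub-query spawned while testing a fixed $u$ concerns a token within distance at most $k$ of $u$. Consequently, each such sub-query can be resolved in $O(n)$ time by directly invoking the constructive $\mathsf{TS}$-move procedure described in the proof of Lemma~\ref{lem:check-b4} (together with Lemma~\ref{lem:rigid-tokens-in-S} restricted to that localized region), so the recursion does not spawn deeper unbounded chains. Summing the resulting $O(n^2)$ per-token cost over the $O(n)$ tokens yields the claimed $O(n^3)$-time algorithm that outputs $\mathcal{R}(G, I)$.
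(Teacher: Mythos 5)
Your proposal is correct and follows essentially the same route as the paper: decide rigidity token by token via Lemmas~\ref{lem:rigid-tokens-in-L} and~\ref{lem:rigid-tokens-in-S}, using Lemma~\ref{lem:find-HIu} to compute $\mathcal{H}(G,I,u)$, Lemma~\ref{lem:HIu-size} to reduce to a single $H$, and Lemma~\ref{lem:check-b4} for condition (b.4), giving $O(n^2)$ work per token and $O(n^3)$ overall. The only quibble is that conditions (a) and (b.3) may trigger up to $O(n)$ (not a constant number of) auxiliary rigidity queries on $G-u$; since each such query is answered in $O(n)$ time this still fits your $O(n^2)$ per-token budget and matches the paper's accounting.
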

\begin{proof}
	Using Lemmas~\ref{lem:rigid-tokens-in-L} and~\ref{lem:rigid-tokens-in-S}, one can naturally design a recursive algorithm to decide whether a token $t$ on some vertex $u \in I$ is $(G, I)$-rigid, where $I$ is a $k$-PVC ($k \geq 4$) of a caterpillar $G = (S \cup L, E)$.
	Such an algorithm should be called at most $O(n)$ time (at most once for each token), and each call requires $O(n)$ time for (possibly) checking if $\mathcal{H}(G, I, u)$ exists (Lemma~\ref{lem:find-HIu}) and $O(n)$ time for (possibly) checking if a $k$-PVC is minimum (Algorithm~\ref{algo:partition} gives the minimum size in $O(n)$ time~\cite{HoangSY22}). 
	Additionally, Lemma~\ref{lem:check-b4} makes sure that the condition (b.4) of Lemma~\ref{lem:rigid-tokens-in-S} can indeed be verified in $O(n)$ time when $k \geq 4$.
	In total, it takes $O(n^2)$ time to verify if the token on a vertex $u \in I$ is $(G, I)$-rigid, and therefore it takes $O(n^3)$ time to find $\mathcal{R}(G, I)$.
\qed\end{proof}

\section{Our Algorithm}
\label{sec:algorithm}

In this section, we use $(G, I, J)$ to denote an instance of \textsc{$k$-PVCR} under $\mathsf{TS}$ whose input contains a caterpillar $G = (S \cup L, E)$ and two $k$-PVCs $I, J$ ($k \geq 4$) of $G$.
Algorithm~\ref{algo:PVCR-k4} decides whether there is a $\mathsf{TS}$-sequence between $I$ and $J$ in polynomial time.
The running time of Algorithm~\ref{algo:PVCR-k4} depends on the time of finding $\mathcal{R}(G, I)$ and $\mathcal{R}(G, J)$, which, as we showed before, is $O(n^3)$.
The rest of this section is devoted to proving its correctness.
More precisely, Lemma~\ref{lem:no-instances} implies that Lines~\ref{algo-line:PVCR-k4-s1}--\ref{algo-line:PVCR-k4-e1} are correct.
Lemma~\ref{lem:remove-rigid-tokens} explains the correctness of Line~\ref{algo-line:PVCR-k4-se2}.
Lemma~\ref{lem:no-rigid-tokens} implies the correctness of Lines~\ref{algo-line:PVCR-k4-s3}--\ref{algo-line:PVCR-k4-e3}.

\begin{algorithm}[!ht]
	\KwIn{A caterpillar $G = (S \cup L, E)$ on $n$ vertices and two $k$-PVCs ($k \geq 4$) $I, J$ of $G$.}
	\KwOut{\textsc{Yes} if there is a $\mathsf{TS}$-sequence between $I$ and $J$, and \textsc{No} otherwise.}
	\SetArgSty{textbb}  
	
	\If{$\vert I \vert \neq \vert J \vert$\label{algo-line:PVCR-k4-s1}}{
		\Return \textsc{No}\;
	}
	\If{$\mathcal{R}(G, I) \neq \mathcal{R}(G, J)$}{
		\Return \textsc{No}\label{algo-line:PVCR-k4-e1}\;
	}
	$G^\prime \gets G - \mathcal{R}(G, I)$\label{algo-line:PVCR-k4-se2}\;
	\ForEach{component $C$ of $G^\prime$\label{algo-line:PVCR-k4-s3}}{
		\If{$\vert V(C) \cap I \vert \neq \vert V(C) \cap J \vert$}{
			\Return \textsc{No}\;
		}
	}
	\Return \textsc{Yes}\label{algo-line:PVCR-k4-e3}\;
	
	\caption{$\mathtt{IsTSReach}(G, I, J)$.}
	\label{algo:PVCR-k4}
\end{algorithm}

The following observations are straightforward.

\begin{lemma}\label{lem:no-instances}
	Let $(G, I, J)$ be an instance of \textsc{$k$-PVCR} ($k \geq 3$) for caterpillars.
	If either $\vert I \vert \neq \vert J \vert$ or $\mathcal{R}(G, I) \neq \mathcal{R}(G, J)$, there is no $\mathsf{TS}$-sequence between $I$ and $J$.
\end{lemma}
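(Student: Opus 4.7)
The plan is to prove the contrapositive of both conditions separately, using only the definitions from Section~\ref{sec:reconf-notation}.

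First I would handle the cardinality condition. By the definition of a $\mathsf{TS}$-sequence $\langle I_0, I_1, \dots, I_q \rangle$, each step satisfies $I_i \setminus I_{i+1} = \{x_i\}$ and $I_{i+1} \setminus I_i = \{y_i\}$, so $\vert I_{i+1} \vert = \vert I_i \vert - 1 + 1 = \vert I_i \vert$. Thus cardinality is preserved along any $\mathsf{TS}$-sequence, and $\vert I \vert = \vert J \vert$ is necessary for reachability. Contraposing, $\vert I \vert \neq \vert J \vert$ forbids any $\mathsf{TS}$-sequence between $I$ and $J$.

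Next I would handle the rigid-set condition by showing that $\mathcal{R}(G, \cdot)$ is invariant under reachability. Suppose there exists a $\mathsf{TS}$-sequence $\mathcal{S}$ from $I$ to $J$, and let $u \in \mathcal{R}(G, I)$. By definition of $(G, I)$-rigidity, every $k$-PVC reachable from $I$ (including $J$) must contain $u$; in particular $u \in J$. Moreover, if the token on $u$ were $(G, J)$-movable, then concatenating $\mathcal{S}$ with a $\mathsf{TS}$-sequence starting from $J$ that slides this token off $u$ would produce a $k$-PVC reachable from $I$ that does not contain $u$, contradicting the $(G, I)$-rigidity of $u$. Hence $u \in \mathcal{R}(G, J)$, giving $\mathcal{R}(G, I) \subseteq \mathcal{R}(G, J)$. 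Applying the identical argument to the reverse $\mathsf{TS}$-sequence $\mathrm{rev}(\mathcal{S})$ from $J$ to $I$ yields the opposite inclusion, so $\mathcal{R}(G, I) = \mathcal{R}(G, J)$. Contraposing, $\mathcal{R}(G, I) \neq \mathcal{R}(G, J)$ forbids any $\mathsf{TS}$-sequence between $I$ and $J$.

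Combining the two implications above completes the proof. There is no serious obstacle here: both arguments are immediate consequences of the definitions of $\mathsf{TS}$-move and of $(G, I)$-rigidity, and the only care needed is to notice that the concatenation of a $\mathsf{TS}$-sequence with a witness move for movability is itself a valid $\mathsf{TS}$-sequence, which is guaranteed by the definition of concatenation in Section~\ref{sec:reconf-notation}.
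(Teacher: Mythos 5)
Your proof is correct, and it fills in exactly the argument the paper intends: the paper states this lemma without proof, calling it a straightforward observation, and your two-part argument (cardinality invariance of single $\mathsf{TS}$-moves, plus invariance of the rigid set via concatenation with the reversed sequence) is the standard justification. No gaps.
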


\begin{lemma}\label{lem:remove-rigid-tokens}
	Let $(G, I, J)$ be an instance of \textsc{$k$-PVCR} ($k \geq 3$) for caterpillars.
	Suppose that $\vert I \vert = \vert J \vert$ and $\mathcal{R}(G, I) = \mathcal{R}(G, J)$.
	Then, $(G, I, J)$ is a yes-instance if and only if $(G^\prime, I \cap V(G^\prime), J \cap V(G^\prime))$ is a yes-instance, where $G^\prime$ is the graph obtained from $G$ by removing every vertex in $\mathcal{R}(G, I) = \mathcal{R}(G, J)$.
\end{lemma}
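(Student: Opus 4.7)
The plan is to prove the two directions separately, exploiting the defining property that a rigid token never moves in any $\mathsf{TS}$-sequence starting from its $k$-PVC.

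For the only-if direction, suppose $\mathcal{S} = \langle I_0 = I, I_1, \ldots, I_q = J\rangle$ is a $\mathsf{TS}$-sequence in $G$. First I would observe that since every token on a vertex in $\mathcal{R}(G,I)$ is $(G,I_i)$-rigid for all $i$ (using the italicized observation preceding Lemma~\ref{lem:rigid-tokens-in-S} together with $\mathcal{R}(G,I) = \mathcal{R}(G,J)$), we have $\mathcal{R}(G,I) \subseteq I_i$ for every $i$, and no $\mathsf{TS}$-move of $\mathcal{S}$ uses an endpoint in $\mathcal{R}(G,I)$. Hence the restricted sequence $\mathcal{S}' = \langle I_0 \cap V(G'), \ldots, I_q \cap V(G')\rangle$ has consecutive members differing by a slide along an edge of $G'$. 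To check that each $I_i \cap V(G')$ is a $k$-PVC of $G'$, I would take any $k$-path $P$ in $G'$ (which is also a $k$-path in $G$) and use that $I_i$ covers $P$; the covering vertex lies on $P$, hence lies in $V(G')$, hence in $I_i \cap V(G')$.

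For the if direction, suppose $\mathcal{S}' = \langle I'_0, \ldots, I'_q\rangle$ is a $\mathsf{TS}$-sequence in $G'$ with $I'_0 = I \cap V(G')$ and $I'_q = J \cap V(G')$. I would lift it by setting $I_i = I'_i \cup \mathcal{R}(G,I)$. The two endpoints match because $\mathcal{R}(G,I) \subseteq I$ and $\mathcal{R}(G,J) = \mathcal{R}(G,I) \subseteq J$, so $I_0 = I$ and $I_q = J$. Each slide of $\mathcal{S}'$ is along an edge of $G' \subseteq G$ between two vertices of $V(G') = V(G) \setminus \mathcal{R}(G,I)$, so the added set $\mathcal{R}(G,I)$ is disjoint from the endpoints of the slide and the move is valid in $G$. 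To verify that each $I_i$ is a $k$-PVC of $G$, I would split any $k$-path $P$ of $G$ into two cases: if $V(P) \cap \mathcal{R}(G,I) \neq \emptyset$, then $P$ is covered by the corresponding rigid token in $I_i$; otherwise $V(P) \subseteq V(G')$, so $P$ is a $k$-path of $G'$ and is covered by some vertex of $I'_i \subseteq I_i$.

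The argument is conceptually short, and the only real obstacle is making sure that rigid tokens of $I$ remain rigid throughout any $\mathsf{TS}$-sequence from $I$ so that no slide ever has an endpoint in $\mathcal{R}(G,I)$. This is precisely the propagation property noted before Lemma~\ref{lem:rigid-tokens-in-S}, applied along the sequence $\mathcal{S}$. Combined with $\mathcal{R}(G,I) = \mathcal{R}(G,J)$ (the hypothesis), this ensures the restricted/lifted sequences behave coherently, and the equivalence follows.
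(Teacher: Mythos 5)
Your proof is correct; the paper itself omits a proof of this lemma (it is listed among the ``straightforward'' observations), and your restriction/lifting argument --- showing that rigidity forces $\mathcal{R}(G,I)$ to stay occupied so no slide touches it, then restricting a sequence to $G'$ or padding one with the fixed set $\mathcal{R}(G,I)$ and checking coverage of $k$-paths by cases --- is exactly the intended argument. No gaps.
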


\begin{lemma}\label{lem:no-rigid-tokens}
	Let $I, J$ be two $k$-PVCs ($k \geq 4$) of a $n$-vertex caterpillar $G = (S \cup L, E)$ with $\vert I \vert = \vert J \vert = s$ and $\mathcal{R}(G, I) = \mathcal{R}(G, J) = \emptyset$.
	Then, one can construct a $\mathsf{TS}$-sequence between $I$ and $J$ in polynomial time.
\end{lemma}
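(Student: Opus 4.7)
The plan is to construct the TS-sequence by transforming both $I$ and $J$ to a common canonical $k$-PVC $K$ of size $s$ and then concatenating $I \to K$ with the reverse of $J \to K$. I would first define $K$ as follows: run $\mathtt{Partition}(G, k, s_1)$ to obtain a partition $P(G, k, s_1) = \{T_1(s_1), \ldots, T_p(s_1)\}$ and a minimum $k$-PVC $\{v_1, \ldots, v_p\}$ of size $p = \psi_k(G)$. Since $\mathcal{R}(G, I) = \emptyset$ forces $s \geq p$, let $K$ consist of this minimum PVC together with $s - p$ additional tokens placed at a canonical set of free leaf-vertices chosen by a fixed left-to-right scan along the spine.

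The main work is the construction $I \to K$; the transformation $J \to K$ is handled symmetrically. I process the properly rooted subtrees $T_1(s_1), T_2(s_1), \ldots, T_p(s_1)$ from left to right: at stage $i$, assuming the tokens inside $T_1(s_1) \cup \cdots \cup T_{i-1}(s_1)$ already agree with $K$, I slide tokens so that the content of $T_i(s_1)$ also matches $K$, either pulling tokens in from later subtrees (when $T_i(s_1)$ has too few tokens) or pushing excess tokens rightward across the boundary (when it has too many). The primitive slides are exactly the explicit constructions used in the only-if direction of Lemma~\ref{lem:rigid-tokens-in-S}, and especially in Claims~\ref{clm:i-equal-1} and~\ref{clm:i-gt-1} of Lemma~\ref{lem:check-b4}, which show how a movable token can be relayed along a chain of spine vertices without ever uncovering a $k$-path.

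The principal obstacle is maintaining the invariant that at the start of each stage $i$, the tokens residing in the remaining subcaterpillar $G - (T_1(s_1) \cup \cdots \cup T_{i-1}(s_1))$ are still collectively non-rigid in the induced subgraph, possibly augmented by a ``guard'' token at $v_{i-1}$ modelling the $k$-paths that cross the current boundary. Establishing this invariant is the delicate step, and it leans on $k \geq 4$: by Lemma~\ref{lem:HIu-size} we then have $|\mathcal{H}(G, I, u)| \leq 1$ and $G[L_G[u]] \notin \mathcal{H}(G, I, u)$, so consuming one boundary slide cannot create the ``tight'' local configurations in $\mathcal{H}(G, I, u)$ that would freeze a previously movable token; this is also the exact point where the argument would fail for $k = 3$, consistent with Remark~\ref{rem:k3-is-harder}. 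Polynomial-time construction then follows routinely: each of the $O(n)$ stages performs $O(n)$ slides (each of which is located in $O(n)$ time using Lemmas~\ref{lem:find-HIu} and~\ref{lem:find-rigid-tokens}), giving a TS-sequence of length $O(n^2)$ and an overall construction time polynomial in $n$.
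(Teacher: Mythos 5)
Your top-level strategy (route both $I$ and $J$ to a common configuration and concatenate one sequence with the reverse of the other) is the same as the paper's, but the meeting configuration is chosen very differently, and the choice matters. The paper does \emph{not} use a fixed canonical $K$: it sorts $I=\{x_1\prec\dots\prec x_s\}$ and $J=\{y_1\prec\dots\prec y_s\}$ under a left-to-right total order $\prec$, processes $i$ from $s$ down to $1$, and moves whichever of $x_i,y_i$ is $\prec$-smaller onto the other. This reduces everything to one elementary task (Claim~\ref{clm:construct-Si}): advance a single token rightward while all tokens to its right are already in final position. That task is solved by showing there is always an index $i^\star$ whose token is on a leaf or can step right --- otherwise consecutive spine distances are forced to equal $k$ all the way down to $s_1$, contradicting that $J$ covers the $k$-paths near $s_1$ --- and then cascading. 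Crucially, the paper always argues in the full graph $G$, where non-rigidity is automatically preserved along any $\mathsf{TS}$-sequence (a rigid token of a reachable configuration would be rigid in $I$).

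Your version has two genuine gaps. First, you never establish that your canonical $K$ (minimum PVC from $\mathtt{Partition}$ plus extra leaf tokens) is reachable from $I$; reachability of an arbitrary equal-size $k$-PVC is precisely the statement being proved, so this cannot be taken for granted --- you must exhibit the sequence, and that is where all the content lies. Second, the invariant you yourself flag as ``the delicate step'' --- that the tokens remaining in $G - (T_1(s_1)\cup\dots\cup T_{i-1}(s_1))$ stay collectively non-rigid in the \emph{induced subgraph} with a guard token at $v_{i-1}$ --- is asserted, not proved, and the justification offered ($\vert\mathcal{H}(G,I,u)\vert\le 1$ and $G[L_G[u]]\notin\mathcal{H}(G,I,u)$ for $k\ge 4$) does not address it: rigidity relative to an induced subgraph is a different relation from rigidity in $G$ and can be created by moving tokens across the boundary, which is exactly the failure mode Remark~\ref{rem:k3-is-harder} warns about. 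A further, smaller issue is that the primitives of Claims~\ref{clm:i-equal-1} and~\ref{clm:i-gt-1} are tailored to relaying a token toward a specific region $H$ adjacent to an occupied vertex $u$, and their reuse for balancing token counts across $\mathtt{Partition}$ subtrees would need its own verification. As it stands the proposal is an outline whose hardest steps are missing; the paper's ordering-based reduction to ``advance one token rightward'' is the idea you would need to supply (or replace).
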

\begin{proof}
	First of all, we define a total ordering $\prec$ on vertices of a caterpillar $G = (S \cup L, E)$ where $S = s_1s_2\dots s_\ell$ is the spine of $G$.
	For two distinct vertices $u, v \in V(G)$, $u \prec v$ if and only if one of the following conditions holds: 
	(a) $u = s_i$ and $v \in \bigcup_{j=i+1}^\ell L_G[s_j]$; 
	(b) $u \in L_G(s_i)$ and $v \in \bigcup_{j=i}^\ell L_G[s_j]$.
	By definition, the ordering $\prec$ between two leaves attached to the same vertex in the spine $S$ can be arbitrarily defined.
	By considering all pairs of vertices in $G$, one can construct an ordering $\prec$ between vertices in $G$ in $O(n^2)$ time.
	
	Let's fix a total ordering $\prec$ on vertices of $G$ as defined above.
	Suppose that $I = \{x_1, x_2, \dots, x_s\}$ and $J = \{y_1, y_2, \dots, y_s\}$ be two $k$-PVCs of $G$ such that $x_1 \prec x_2 \prec \dots \prec x_s$ and $y_1 \prec y_2 \prec \dots \prec y_s$.
	Our goal is to construct a $\mathsf{TS}$ between $I$ and $J$ that slides the token on $x_i$ to $y_i$, for $1 \leq i \leq s$.
	
	\begin{claim}\label{clm:construct-Si}
		Let $i \in \{1, \dots, s\}$ be such that $x_i \neq y_i$ and $x_j = y_j$ for every $j \in \{i+1, \dots, s\}$.
		Moreover, suppose that $x_i \prec y_i$.
		Then, one can construct in polynomial time a $\mathsf{TS}$-sequence $\mathcal{S}_i$ in $G$ starting from $I$ that slides the token $t_i$ on $x_i$ to $y_i$.
	\end{claim}
	\begin{proof}
		From our assumption, note that $y_i \notin I$.
		If $i = 1$, the claim is trivial: simply sliding the token on $v_i$ to $w_i$ along $P_{v_iw_i}$ is enough.
		If $x_i \in L_G(y_i)$, since every $k$-path covered by $x_i$ is also covered by $y_i$, one can immediately slide the token on $x_i$ to $y_i$.
		Thus, we can assume w.l.o.g that $2 \leq i \leq s$ and $x_i \notin L_G(y_i)$.
		
		We now claim that there must be an index $i^\star \in \{1, \dots, i\}$ such that either $x_{i^\star} \in L$ or the token on $x_{i^\star}$ can be immediately moved to $r(x_{i^\star})$.
		Since $k \geq 4$ and $\mathcal{R}(G, I) = \emptyset$, there is no token $t_x$ on $x \in I \cap S$ such that any $\mathsf{TS}$-sequence $\mathcal{S}_x$ moving $t_x$ must move it to some $y \in L_G(x)$; otherwise, one can apply $\mathcal{S}_x$ and replace the $\mathsf{TS}$-step that immediately moves $t_x$ to $y$ by a $\mathsf{TS}$-step that immediately moves $t_x$ to either $l(x)$ or $r(x)$, a contradiction.
		Suppose to the contrary that for every $j \in \{1, \dots, i\}$, we have $x_j \in S$ and the token on $x_j$ cannot be immediately moved to $r(x_j)$.
		Therefore, $\text{dist}_G(x_j, x_{j-1}) = k$ for $1 < j \leq i$, and $\text{dist}_G(s_1, x_1) = k - 2$. 
		(Recall that we assumed $\deg_G(s_1) \geq 2$.) 
		Additionally, since $J$ is a $k$-PVC, we have $\text{dist}_G(y_j, y_{j-1}) \leq k$ for $1 < j \leq i$.
		Since $x_i \in S$, $x_i \prec y_i$, and $x_i \neq y_i$, it follows that $\text{dist}_G(s_1, y_1) > \text{dist}_G(s_1, x_1) = k-2$, which contradicts the fact $J$ is a $k$-PVC, because there is some $k$-path containing $s_1$ and one of its leaf-neighbors that is not covered by any vertex in $J$.
		By simply checking tokens one by one, we can find $i^\star$ in polynomial time.
		We are interested in the largest index among all of such $i^\star$.
		
		\begin{algorithm}[!ht]
			\KwIn{A caterpillar $G = (S \cup L, E)$ on $n$ vertices, two $k$-PVCs ($k \geq 4$) $I = \{x_1, \dots, x_s\}$ and $J = \{y_1, \dots, y_s\}$ of $G$, an index $i$ such that $x_i \prec y_i$ and $x_j = y_j$ for $j \in \{i+1, \dots, s\}$.}
			\KwOut{A $\mathsf{TS}$-sequence $S_i$ that slides the token $t_i$ on $x_i$ to $y_i$.}
			\SetArgSty{textbb}  
			
			$\mathcal{S}_i \gets \langle I \rangle$; $I^\prime = \emptyset$\;
			\If{$x_i \in L_G(y_i)$}{
				$\mathcal{S}_i \gets \mathcal{S}_i \oplus \langle I, I \setminus \{x_i\} \cup \{y_i\} \rangle$\;
				\Return $\mathcal{S}_i$\;
			}
			\While{$y_i \notin I$}{
				\If{$y_i \in L$}{
					Let $y$ be the unique neighbor of $y_i$\;
					\If{$y \in I$}{
						\If{the token on $y$ cannot be immediately slid to $y_i$}{
							Find the largest index $i^\star \in \{1, \dots, i-1\}$ such that either $x_{i^\star} \in L$ or the token on $x_{i^\star}$ can be immediately slid to $r(x_{i^\star})$\;
						}
						\Else{
							$\mathcal{S}_i \gets \mathcal{S}_i \oplus \langle I,  I \setminus \{y\} \cup \{y_i\} \rangle$\; 
							\Return $\mathcal{S}_i$\;
						}
					}	
				}
				\Else{
					Find the largest index $i^\star \in \{1, \dots, i\}$ such that either $x_{i^\star} \in L$ or the token on $x_{i^\star}$ can be immediately slid to $r(x_{i^\star})$\;
				}
				\If{$x_{i^\star} \in L$}{
					\If{$N_G(x_{i^\star}) \cap I \neq \emptyset$}{
						Find the smallest index $j^\star \in \{i^\star+1, \dots, s\}$ such that $x_{j^\star} \in S$ and the token on $x_{j^\star}$ can be immediately slid to $l(x_{j^\star})$\;
						\ForEach{$j \in \{j^\star, j^\star-1, \dots, i^\star+1\}$ with $x_j \in S$}{
							$I^\prime \gets I \setminus \{x_j\} \cup \{l(x_j)\}$\;
							$\mathcal{S}_i \gets \mathcal{S}_i \oplus \langle I, I^\prime \rangle$; $I \gets I^\prime$\;
						}
					}
					$I^\prime \gets I \setminus \{x_{i^\star}\} \cup N_G(x_{i^\star})\}$\;
				}
				\Else{
					$I^\prime \gets I \setminus \{x_{i^\star}\} \cup \{r(x_{i^\star})\}$\;
				}
				
				$\mathcal{S}_i \gets \mathcal{S}_i \oplus \langle I, I^\prime \rangle$; $I \gets I^\prime$\;
			}
			\Return $\mathcal{S}_i$\;
			\caption{Construction of $S_i$.}
			\label{algo:construct-Si}
		\end{algorithm}
		
		Algorithm~\ref{algo:construct-Si} describes how to construct a $\mathsf{TS}$-sequence $S_i$ that slides the token $t_i$ on $x_i$ to $y_i$.
		Intuitively, to do this, in Algorithm~\ref{algo:construct-Si}, we repeatedly find the largest index $i^\star$ such that either $x_{i^\star} \in L$ or the token on $x_{i^\star}$ can be immediately moved to $r(x_{i^\star})$.
		If the token on $x_{i^\star}$ can be immediately moved to $r(x_{i^\star})$, we perform that token-slide.
		If $x_{i^\star} \in L$, we move the token on $x_{i^\star}$ to its unique neighbor $y$ in $S$ as follows. 
		If no token is placed on $y$, we can immediately slide the token on $x_{i^\star}$ to $y$.
		Otherwise, for $i^\star < j \leq s$ and $x_j \in S$, the token on $x_j$ cannot be immediately slid to $r(x_j)$.
		Note that from definition of $\prec$, we must have $x_{i^\star+1} = y$.
		Since $\mathcal{R}(G, I) = \emptyset$, it follows that there exists $j \in \{i^\star+1, \dots, s\}$ such that $x_j \in S$ and the token on $x_j$ can be immediately slid to $l(x_j)$, and let $j^\star$ be the smallest index among such $j$.
		Now, for $j$ from $j^\star$ downto $i^\star+1$, if $x_j \in S$, we immediately slide the token on $x_j$ to $l(x_j)$, which can be done because the distance between any two consecutive tokens in $S$ is always at most $k \geq 4$.
		Finally, we can now immediately slide the token on $x_{i^\star}$ to $y$.
		Note that, if the token on $x_i$ can be immediately moved to its right-neighbor, $i^\star = i$. 
		Then, the correctness of Algorithm~\ref{algo:construct-Si} follows.
		
		Note that, if $i^\star < i$, after $O(\text{dist}_G(v_{i^\star}, v_{i^\star + 1}))$ next iterations of the \textbf{while} loop in Algorithm~\ref{algo:construct-Si}, the value of $i^\star$ increases by at least $1$.
		On the other hand, if $i^\star = i$, the value of $i^\star$ in the next iterations may be decreased, but will finally increase again because tokens are, in some sense, moved to the ``right-hand side'' and never moved ``more than one-step'' back.
		It follows that Algorithm~\ref{algo:construct-Si} stops after $O(n)$ iterations, and therefore runs in polynomial time.
		\qed\end{proof}
	
	Using the above claim, one can construct a $\mathsf{TS}$-sequence $\mathcal{S}$ between $I = \{x_1, \dots, x_s\}$ and $J = \{y_1, \dots, y_s\}$ as follows.
	Initially, $\mathcal{S}_I = \mathcal{S}_J = \emptyset$.
	For each $i$ from $s$ downto $1$, if $x_i \prec y_i$ (resp., $y_i \prec x_i$), construct a $\mathsf{TS}$-sequence $\mathcal{S}_i^I$ (resp., $\mathcal{S}_i^J$) that slides the token on $x_i$ to $y_i$, then assign $\mathcal{S}_I$ (resp., $\mathcal{S}_J$) to $\mathcal{S}_I \oplus \mathcal{S}_i^I$ (resp., $\mathcal{S}_J \oplus \mathcal{S}_i^J$) and $I$ (resp., $J$) to the resulting $k$-PVC $I^\prime$ obtained by performing $\mathcal{S}_i^I$ (resp., $\mathcal{S}_i^J$).
	Intuitively, in each iteration, we move either a token in $I$ from $x_i$ to $y_i$ or a token in $J$ from $y_i$ to $x_i$, depending on whether $x_i \prec y_i$ or $y_i \prec x_i$.
	Finally, our desired $\mathsf{TS}$-sequence $\mathcal{S} = \mathcal{S}_I \oplus \text{rev}(\mathcal{S}_J)$.
	Since each iteration can be performed in polynomial time (by the above claim), our algorithm runs in polynomial time.
\qed\end{proof}

\section{Conclusion}\label{sec:conclusion}
In this paper, we have shown that \textsc{$k$-PVCR} for caterpillars under $\mathsf{TS}$ can be solved in polynomial time when $k \geq 4$.
Our main contribution is the characterization of ``a region surrounding a token'' which, under certain conditions, makes the token rigid.
Unfortunately, it remains unknown whether one can find all rigid tokens in polynomial time when $k = 3$, and therefore this case remains open.
We believe that this will also be useful for tackling the open question for trees.

\section*{Acknowledgements}
We thank Akira~Suzuki for his useful comments leading to some ideas presented in this paper.
This research is partially supported by the Japan Society for the Promotion of Science (JSPS) KAKENHI Grant Number JP20H05964 (AFSA).

\bibliographystyle{splncs04}
\bibliography{refs.bib}

\end{document}